\documentclass[a4paper]{article}

\usepackage{amsmath}
\usepackage{amssymb}
\usepackage{amsthm}
\usepackage[left=1.5in,right=1.5in,top=1.5in,bottom=1.5in]{geometry}

\newcommand{\R}{\mathbb{R}}

\newcommand{\BPP}{\mathsf{BPP}}
\newcommand{\SZK}{\mathsf{SZK}}
\newcommand{\NP}{\mathsf{NP}}
\newcommand{\AM}{\mathsf{AM}}
\newcommand{\PSPACE}{\mathsf{PSPACE}}

\newcommand{\EXP}{\mathsf{EXP}}
\newcommand{\NEXP}{\mathsf{NEXP}}

\newcommand{\IP}{\mathsf{IP}}
\newcommand{\MIP}{\mathsf{MIP1}}
\newcommand{\PCP}{\mathsf{PCP}}

\DeclareMathOperator{\Opt}{Opt}

\DeclareMathOperator*{\E}{\mathbb{E}}
\DeclareMathOperator{\poly}{\mathsf{poly}}

\newcommand{\MaxLin}{\textsc{Max $3$-Lin-$2$}}

\newcommand{\bool}{\{0,1\}}

\theoremstyle{plain}
\newtheorem{theorem}{Theorem}[section]

\newtheorem{lemma}[theorem]{Lemma}
\newtheorem{proposition}[theorem]{Proposition}
\theoremstyle{definition}
\newtheorem{definition}[theorem]{Definition}
\theoremstyle{remark}

\newtheorem{fact}[theorem]{Fact}

\begin{document}

  \title{On the Power of Many One-Bit Provers}

  \author{
    Per Austrin\thanks{E-mail: \texttt{austrin@kth.se}.}\\
    Aalto University and
    KTH Royal Institute of Technology\\
    \and
    Johan H{\aa}stad\thanks{E-mail: \texttt{johanh@kth.se}. Funded by ERC advanced grant 226202.}\\
    KTH Royal Institute of Technology
    \and
    Rafael Pass\thanks{E-mail: \texttt{rafael@cs.cornell.edu}. Funded
      by NSF, DARPA, AFOSR, a Sloan Foundation Fellowship, and a Microsoft Research Fellowship.}\\
    Cornell University
  }

  \maketitle

 \begin{abstract}
We study the class of languages, denoted by
$\MIP[k, 1-\epsilon, s]$, which have $k$-prover games
where each prover just sends a
\emph{single} bit, with completeness $1-\epsilon$ and soundness error
$s$.  For the case that $k=1$ (i.e., for the
case of interactive proofs), Goldreich, Vadhan and Wigderson ({\em
  Computational Complexity'02}) demonstrate that $\SZK$ exactly
characterizes languages having 1-bit proof systems with``non-trivial''
soundness (i.e., $1/2 < s \leq 1-2\epsilon$). We demonstrate that for
the case that $k\geq 2$, 1-bit $k$-prover games exhibit a
significantly richer structure:
\begin{itemize}
\item (Folklore) When $s \leq \frac{1}{2^k} - \epsilon$, $\MIP[k, 1-\epsilon,
  s] = \BPP$;
\item When $\frac{1}{2^k}  + \epsilon \leq s < \frac{2}{2^k}
    -\epsilon$, $\MIP[k, 1-\epsilon, s] = \SZK$;
\item When $s \ge \frac{2}{2^k}  + \epsilon$, $\AM \subseteq \MIP[k,
 1-\epsilon, s]$;
\item For $s \le 0.62 k/2^k$ and sufficiently large $k$, $\MIP[k, 1-\epsilon, s] \subseteq \EXP$;
 \item For $s \ge 2k/2^{k}$, $\MIP[k,
  1, 1-\epsilon, s] = \NEXP$.
\end{itemize}
As such, 1-bit $k$-prover games yield a natural ``quantitative''
approach to relating complexity classes such as $\BPP$,$\SZK$,$\AM$, $\EXP$,
and $\NEXP$.  We leave open the question of whether a more
fine-grained hierarchy (between $\AM$ and $\NEXP$) can be established
for the case when $s \geq \frac{2}{2^k} + \epsilon$.
\end{abstract}

  \section{Introduction}
We study the expressiveness of $k$-prover games (introduced by Ben-Or,
Goldwasser, Kilian and Wigderson \cite{BGKW88}), where each prover
sends a \emph{single} bit. Let $\MIP[k, 1-\epsilon, s]$ denote the
class of languages having a $k$-prover game where each prover sends a
single bit, completeness $1-\epsilon$, and soundness error $s$.
Throughout the paper, we think of $k$ as a constant and $\epsilon$ as
an arbitrarily small constant.  Clearly, for a fixed $k$, as $s$ increases the
corresponding complexity class can only become larger. We are
interested in understanding to what extent the complexity class grows,
and whether the growth is ``smooth'' or if threshold phenomena occur.

When the soundness error is ``too small'', only trivial languages can
have such games.  In particular, provers sending random bits succeed
with probability at least $(1-\epsilon)2^{-k}$, placing the language
of any protocol with smaller soundness in $\BPP$.
  \begin{theorem}[Folklore, implicit in \cite{BGS98}]
    For every $k \ge 1$, $\epsilon > 0$, we have
    $$\MIP[k,1-\epsilon,1/2^k-\epsilon] = \BPP$$
  \end{theorem}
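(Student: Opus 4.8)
The plan is to prove the two inclusions separately. The inclusion $\BPP \subseteq \MIP[k,1-\epsilon,1/2^k-\epsilon]$ is the easy direction: given a $\BPP$ language, I would have the verifier ignore the provers entirely and simply run the (amplified) randomized decision procedure for the language, accepting or rejecting according to its output. By standard amplification the two-sided error can be driven below $\delta := \min(\epsilon,\, 1/2^k - \epsilon)$, which is positive whenever $\epsilon < 1/2^k$ (the only regime of interest), so the resulting protocol has completeness at least $1-\epsilon$ and soundness error at most $1/2^k - \epsilon$.

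The substance is in the reverse inclusion $\MIP[k,1-\epsilon,1/2^k-\epsilon] \subseteq \BPP$. The idea, as hinted in the surrounding text, is to analyze the \emph{random provers} strategy in which each prover answers with an independent uniformly random bit, ignoring its question. Fix an input $x$ and let $r$ denote the verifier's random string, which determines the $k$ questions together with an acceptance predicate $V_r$ on the answer vector $a \in \bool^k$. Under the random provers strategy the answer vector is uniform on $\bool^k$ independently of $r$, so the acceptance probability equals $\E_r\!\left[\, 2^{-k}\, |\{a : V_r(a) = 1\}| \,\right]$, and this single quantity can be estimated to within any inverse-polynomial accuracy in $\BPP$ by sampling $r$, sampling a uniform answer vector $a$, and evaluating the polynomial-time predicate $V_r(a)$.

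The key step is to show that this number separates yes-instances from no-instances. For completeness I would first note that we may assume the honest provers are deterministic, since fixing their best (possibly shared) coins can only increase the acceptance probability. When $x \in L$, the honest answer vector $a^\ast(r)$ is accepting for at least a $(1-\epsilon)$ fraction of the verifier's random strings $r$, and since the random provers reproduce any fixed vector with probability exactly $2^{-k}$, their acceptance probability is at least $(1-\epsilon)2^{-k} = 2^{-k} - \epsilon 2^{-k}$. When $x \notin L$, the random provers are merely one admissible strategy, so soundness bounds their acceptance probability by $2^{-k} - \epsilon$. These two thresholds differ by $\epsilon(1 - 2^{-k}) \ge \epsilon/2$, so the $\BPP$ algorithm that estimates the random-provers acceptance probability to accuracy $\epsilon/8$ and thresholds in the middle of the gap decides $L$ with bounded error.

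The hard part here is really just conceptual rather than technical: one must recognize that a fixed \emph{oblivious} strategy (uniform bits) is simultaneously good enough for yes-instances — because it hits the honest answer with probability $2^{-k}$ — and constrained by soundness on no-instances. The only thing to verify with care is that the completeness lower bound $2^{-k} - \epsilon 2^{-k}$ and the soundness upper bound $2^{-k} - \epsilon$ leave a constant gap for every $k \ge 1$, which is precisely where the choice of threshold $1/2^k - \epsilon$ (rather than just $1/2^k$) is used. Everything else is routine: the verifier's question generation and acceptance predicate run in polynomial time, the answer space has constant size $2^k$, and a Chernoff bound delivers the required sampling accuracy with polynomially many trials.
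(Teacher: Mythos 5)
Your proposal is correct and follows exactly the argument the paper gives (the paper only sketches it in one sentence in the introduction: random provers succeed with probability at least $(1-\epsilon)2^{-k}$ on yes-instances, which exceeds the soundness bound $1/2^k-\epsilon$ on no-instances, so estimating the random-provers acceptance probability by sampling places the language in $\BPP$). Your write-up fills in the routine details, including the easy $\BPP \subseteq \MIP$ direction and the gap computation, all correctly.
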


An interesting result by Goldreich, Vadhan and Wigderson \cite{GVW02}
shows that when $k=1$ (i.e., for interactive proofs
\cite{GMR89,BM88}), whenever the soundness is ``non-trivial'', then  
$\MIP[1, 1-\epsilon, s]$ characterizes $\SZK$, the class of languages having statistical zero-knowledge proofs.
We here focus on the case when $k\geq 2$. As we shall see, in this
setting, 1-bit $k$-prover games contains a richer variety of
complexity classes. We take a first step towards characterizing these classes.

Our first result is a simple generalization of the result of
\cite{GVW02}: we show that when $\frac{1}{2^k}  + \epsilon \leq s <
\frac{2}{2^k} - \epsilon$, then $\MIP[k, 1-\epsilon, s]$
characterizes $\SZK$.

  \begin{theorem}
    \label{thm:szk_range}
    For every $k \ge 2$, $\epsilon > 0$, and $1/2^k + \epsilon < s <
    2/2^k - \epsilon$, we have
    $$
    \MIP[k, 1-\epsilon, s] = \SZK.
    $$
  \end{theorem}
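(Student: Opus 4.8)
The plan is to prove the two inclusions separately, the left-to-right one being the substantive part.

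For $\SZK \subseteq \MIP[k,1-\epsilon,s]$ I would reduce from the $\SZK$-complete Statistical Difference problem: given circuits sampling distributions $D_0,D_1$ over a common domain, decide whether $\Delta(D_0,D_1)\ge 1-\delta$ or $\Delta(D_0,D_1)\le\delta$, where by the polarization lemma $\delta$ may be taken exponentially small. The $k$-prover game samples $k$ independent uniform bits $c_1,\dots,c_k$, draws $y_i\sim D_{c_i}$, sends $y_i$ to prover $i$, and accepts iff prover $i$ answers $c_i$ for every $i$. Since the pairs $(c_i,y_i)$ are independent across provers and prover $i$ sees only $y_i$, the optimal value factorizes as $\big(\tfrac12(1+\Delta)\big)^k$. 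In the far case this is at least $(1-\delta/2)^k\ge 1-\epsilon$, and in the close case at most $(\tfrac12+\tfrac\delta2)^k\le 2^{-k}+\epsilon\le s$, using $s>2^{-k}+\epsilon$; choosing $\delta$ small gives the desired completeness and soundness.

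For the harder inclusion $\MIP[k,1-\epsilon,s]\subseteq\SZK$, let the verifier have randomness $r$, send question $q_i(r)$ to prover $i$, and accept according to a predicate $A(r,b_1,\dots,b_k)$; write $\Opt$ for the optimal value. The key idea is to pass to a \emph{single}-prover one-bit game $G'$: the verifier picks a uniform index $i\in[k]$ and uniform bits $\gamma_j$ for all $j\ne i$, sends $(i,q_i(r))$ to the lone prover, and accepts iff $A(r,\gamma_{-i},b)=1$, where $b$ is the answer placed in coordinate $i$. Writing $V_i$ for the value of the original game when prover $i$ plays optimally and every other prover answers an independent uniform bit, the value of $G'$ equals $\frac1k\sum_i V_i$. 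Two bounds on $V_i$ then separate the cases. First, $V_i\le\Opt$, because ``prover $i$ arbitrary, the rest uniform'' is a legal strategy in the original game (an average of deterministic strategies); hence in the NO case the value of $G'$ is at most $\Opt\le s$. Second, fixing prover $i$ to its part of an optimal strategy and letting the padding bits $\gamma_{-i}$ coincide with the remaining coordinates of an accepting answer-vector (an event of probability $2^{-(k-1)}$) already accepts whenever the optimal strategy does; hence in the YES case $V_i\ge(1-\epsilon)2^{-(k-1)}=(1-\epsilon)\,2/2^k$, so the value of $G'$ is at least $(1-\epsilon)\,2/2^k$.

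These bounds leave a genuine gap precisely in our regime: since $s<2/2^k-\epsilon$, one checks $(1-\epsilon)\,2/2^k-s>\epsilon\,(1-2^{1-k})$, which is positive exactly because $k\ge 2$ (it vanishes at $k=1$, matching the stronger hypothesis needed in the $k=1$ theorem of \cite{GVW02}). It remains to decide, within a constant promise gap, whether the single-prover one-bit game $G'$ has value $\ge(1-\epsilon)2/2^k$ or $\le s$. This is exactly the $k=1$ situation: grouping the verifier's randomness by the question $Q$ and writing $v_b(Q)$ for the conditional acceptance probability on answer $b$, the value of $G'$ equals $\E_Q[\max(v_0(Q),v_1(Q))]=\rho'+\tfrac12\,\E_Q|v_0(Q)-v_1(Q)|$, where $\rho'$ is the efficiently estimable random-answer baseline and the second term is essentially a statistical distance between two efficiently samplable sub-distributions. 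Deciding it against the two thresholds thus reduces to Statistical Difference and lands in $\SZK$. I expect the main work to be this last step --- faithfully encoding $\rho'+\tfrac12\E|v_0-v_1|$ as a single Statistical Difference instance with correctly placed thresholds (folding in an estimate of the input-dependent baseline $\rho'$) and amplifying the constant gap by polarization; this is the technical heart inherited from \cite{GVW02}, whereas the genuinely new observation is that padding the $k-1$ silent provers with random bits costs a factor $2^{-(k-1)}$, which stays above the soundness $s$ exactly when $s<2/2^k$.
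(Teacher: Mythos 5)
Your lower bound ($\SZK \subseteq \MIP[k,1-\epsilon,s]$) is essentially the paper's: the paper takes the one-prover, one-bit protocol of \cite{GVW02}, repeats it $k$ times in parallel, and hands the repetitions to the $k$ provers, which unwinds to exactly your ``guess $c_i$ from a sample of $D_{c_i}$'' game. That direction is fine.

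The upper bound has a genuine gap at its final step. Your reduction to the single-prover one-bit game $G'$ (random index $i$, \emph{random} padding bits $\gamma_{-i}$) is correct as far as it goes, and both bounds $V_i\le\Opt$ and $V_i\ge 2^{-(k-1)}\Opt$ are right. But the promise problem you arrive at is a one-bit interactive proof with completeness $c'=(1-\epsilon)2^{1-k}$ and soundness $s'=s$, which can be as large as $2^{1-k}-\epsilon$. This lies far outside the regime $c'^2>s'$ of Theorem 3.1 of \cite{GVW02} (already for $k=2$ one has $c'^2\approx 1/4$ while $s'$ approaches $1/2$), and correspondingly the Statistical Difference instance you would extract from $\E_Q|v_0(Q)-v_1(Q)|$ must be decided against thresholds $\alpha>\beta$ with $\beta\ge\alpha^2$. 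That is the non-polarizable regime, where membership of the threshold Statistical Difference problem in $\SZK$ is not known; so the ``technical heart inherited from \cite{GVW02}'' cannot actually be inherited at these parameters. The paper avoids this entirely by padding \emph{existentially} rather than randomly: its derived single-prover verifier sends $q_1(r)$, receives one bit $b$, and accepts iff \emph{there exist} bits $b_2,\dots,b_k$ making the original verifier accept. Completeness stays at $1-\epsilon$, while soundness inflates only to $2^{k-1}s\le 1-2\epsilon$ (otherwise the cheating prover together with uniformly random silent provers would beat the original soundness); since $(1-\epsilon)^2>1-2\epsilon>(1-\epsilon)/2$, the GVW theorem then applies as a black box. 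Replacing your random padding by this existential padding turns your argument into the paper's proof.
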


Our main result next shows that when the soundness becomes just
slightly higher than $2/2^k$, $\MIP$s appear to become significantly more
powerful; in particular, they contain all of $\AM$.
  \begin{theorem} [Main Theorem]
\label{main.thm}
    For every $k \ge 2$ and $\epsilon > 0$
    $$\AM \subseteq \MIP[k,1-\epsilon, 2/2^k+\epsilon]$$
  \end{theorem}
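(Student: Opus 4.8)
The plan is to reduce the Main Theorem to a purely existential (i.e.\ $\NP$-flavored) statement and then build a one-bit protocol for that. Recall that $L \in \AM$ admits a public-coin verifier for which, after amplification, $x \in L$ implies that for all but a $2^{-n}$ fraction of Arthur's coin strings $r$ there is a Merlin message $w$ with $V(x,r,w)=1$, whereas $x \notin L$ implies that for all but a $2^{-n}$ fraction of $r$ no such $w$ exists. Since the $\MIP$ verifier may toss Arthur's coins itself, it suffices to exhibit, for every $\NP$ relation, a one-bit $k$-prover protocol with completeness $1-\epsilon$ and soundness approaching $2/2^k$ that certifies ``$\exists w : V(x,r,w)=1$'': averaging over $r$ then yields completeness $\ge 1-\epsilon-2^{-n}$ and soundness $\le 2/2^k + \epsilon/2 + 2^{-n}$, and the $2^{-n}$ terms are absorbed into $\epsilon$. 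Thus the whole theorem rests on placing $\NP$ into $\MIP[k,1-\epsilon,2/2^k+\epsilon]$.

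For the $\NP$ core I would use the acceptance predicate ``all $k$ provers output the same bit''. Its two accepting patterns $0^k$ and $1^k$ make the random-answer value exactly $2/2^k$, which both explains the threshold and is the natural two-pattern generalization of the single-pattern predicate underlying Theorem~\ref{thm:szk_range}. The proof string is a long-code (dictatorship) encoding of a satisfying assignment of the instance, and the verifier picks a random local test together with $k$ query positions that a correct folded encoding is forced to answer identically; each position is routed to a distinct prover, who honestly returns the queried bit. Symmetrizing by sending the $k$ roles to the provers in a uniformly random order lets us treat the strategies as querying a single shared proof, and folding over true/false aligns the honest answers so that completeness $1-\epsilon$ holds even after adding the small amount of noise needed in the test.

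For soundness, write $F_j=(-1)^{f_j}$ for prover $j$'s strategy and expand the all-equal indicator in the Fourier basis, giving
\[
\Pr[\text{accept}] = \frac{2}{2^k} + \frac{1}{2^{k-1}} \sum_{\substack{\emptyset \neq S \subseteq [k] \\ |S| \text{ even}}} \E\Big[\prod_{j \in S} F_j(q_j)\Big],
\]
where the expectation is over the verifier's queries. The empty set contributes the baseline $2/2^k$, so it remains to show that on a no-instance every even-order correlation $\E[\prod_{j\in S}F_j(q_j)]$ is small. I would arrange the query distribution so that a non-negligible $S$-correlation forces, via the standard Fourier-weight/list-decoding machinery for long codes, a common label assignment satisfying the underlying constraints; since no satisfying assignment exists, all these correlations are negligible in total and the accept probability stays below $2/2^k+\epsilon$.

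The main obstacle is this soundness analysis. Unlike the classical three-bit linearity test, where a single parity term has to be controlled, here the all-equal predicate produces the whole family of even-order terms $\prod_{j\in S}F_j$, and they must be bounded simultaneously while accounting for possibly distinct and adversarially correlated prover strategies. Turning a large higher-order (not merely pairwise) correlation into an actual witness is the delicate step, as is designing the query distribution so that every such term is killed in the absence of a witness; getting the completeness to exactly $1-\epsilon$ under folding and test noise is a secondary but necessary check.
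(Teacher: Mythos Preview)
Your reduction of $\AM$ to an $\NP$ core is the same first step the paper takes. After that the two approaches diverge, and your proposal has a real gap.

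The paper does \emph{not} use the all-equal predicate or long codes. It reduces from H{\aa}stad's \MaxLin{} and has the provers give the \emph{Hadamard} encoding $a\mapsto\chi_a(x)$ of a good assignment $x$. The verifier draws one random $y\in\bool^n$ and $k-1$ random equations $\chi_{l_j}(x)=b_j$, sends $y$ to prover $k$ and $l_j\oplus y$ to prover $j$, and checks the $k-1$ parities $P_j(l_j\oplus y)\cdot P_k(y)=b_j$. This predicate also has exactly two accepting patterns, so the target $2/2^k$ is the same as yours, but the \emph{star} structure (all tests share the anchor $P_k(y)$) is what makes the soundness go through in a few lines: writing $(1+\delta_j(y))/2$ for the success probability of the $j$th parity conditioned on $y$, a short Fourier calculation gives $\E_y[\delta_j(y)^2]\le(2\Opt(\Psi)-1)^2$, so for all but an $\epsilon_0$-fraction of $y$ every $\delta_j(y)$ is tiny; and \emph{conditioned on $y$ the $k-1$ tests are independent} (the $l_j$ are drawn independently). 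Hence the accept probability is at most $k\epsilon_0+\big((1+\sqrt{\epsilon_0})/2\big)^{k-1}$. No higher-order correlations are ever touched.

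Your route leaves exactly the hard step open. You never specify the query distribution: ``pick a random local test together with $k$ positions that a correct folded encoding is forced to answer identically'' is not a test, and without it none of the terms $\E\big[\prod_{j\in S}F_j(q_j)\big]$ can even be written down, let alone bounded. Even granting some long-code construction, you would then have to control all $2^{k-1}-1$ even-order correlations simultaneously and decode each large one into a witness; you correctly flag this as ``the delicate step'' but give no mechanism for it. The paper's star-plus-second-moment trick is precisely the idea that sidesteps this difficulty, and it is the idea your proposal is missing.
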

For instance, when $k = 2$, our $\MIP$ has soundness error
$\frac{1}{2}+\epsilon$.
This result should be compared to H\aa stad's 3-bit $\PCP$
\cite{Has01} that achieves the same soundness error. Since every 1-bit
$k$-prover game yields a $k$-bit $\PCP$, our $\MIP$ yields a 2-bit
$\PCP$ for $\AM$ with soundness error $1/2 + \epsilon$; in contrast,
the $\PCP$ resulting from our $\MIP$ is exponentially long, whereas
H\aa stad's $\PCP$ is polynomially long.  Nonetheless, as we shall see
shortly, our $\MIP$ construction heavily relies on H\aa stad's $\PCP$.

We leave open the question of whether $\MIP[k,1-\epsilon,
2/2^k+\epsilon]$ contains even richer complexity classes than $\AM$.
As a first step towards this question, we note that $\EXP$ is an
upper bound on this class.
 \begin{theorem}
   \label{thm:exp_bound}
   For all sufficiently large $k$, $\epsilon > 0$, $s \leq \frac{0.62k}{2^k}(1 - \epsilon)$ we have
   $$\MIP[k, 1-\epsilon, s] \subseteq \EXP.$$
   This holds also for $k = 3$ and $s \le 1/2-\epsilon$.
 \end{theorem}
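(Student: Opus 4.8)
The plan is to reduce the problem to approximating the optimal value of a suitable Max-$k$-CSP instance, and to argue that in the relevant low-soundness regime a semidefinite relaxation already pins down this value accurately enough to separate the completeness and soundness cases in exponential time. First I would make the game explicit. A one-bit $k$-prover game is specified by the verifier's $\poly(n)$-time computation on $r = \poly(n)$ random bits; enumerating all $2^r$ random strings and simulating the verifier yields, for each outcome, the $k$ questions, the acceptance predicate $V_q \colon \bool^k \to \bool$, and its probability $p_q$. Identifying each prover's answer on each of its (at most $2^r$) possible questions with a single Boolean variable, the game value $\Opt$ becomes \emph{exactly} the optimum of a Max-$k$-CSP instance with $2^{\poly(n)}$ variables and constraints, in which each constraint reads one variable from each prover. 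This instance is writable in $\EXP$ time, so it suffices to distinguish $\Opt \ge 1-\epsilon$ from $\Opt \le s$ in $\EXP$ time.

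Next I would set up the standard SDP relaxation of this Max-$k$-CSP: a unit vector for the $\pm 1$ encoding of each variable together with a global unit vector, and objective the natural relaxation of $\E_q[V_q(\cdots)]$. The relaxation value $\nu$ satisfies $\nu \ge \Opt$, and being an SDP of size $2^{\poly(n)}$ it can be computed to sufficient accuracy in $2^{\poly(n)} = \EXP$ time. The decision procedure is then simply to accept iff $\nu \ge 1-\epsilon$. Completeness is immediate, since on yes-instances $\nu \ge \Opt \ge 1-\epsilon$; crucially, the algorithm performs \emph{no} rounding, which enters only in the soundness analysis.

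The heart of the matter, and the main obstacle, is a rounding lemma giving an approximation ratio $\rho$: from any feasible SDP solution of value $\nu$ one can (in the analysis) extract an integral strategy of expected value at least $\rho\,\nu$, so that $\Opt \ge \rho\,\nu$. I would prove this with $\rho \ge 0.62\,k/2^k$ for all sufficiently large $k$ by a Charikar--Makarychev--Makarychev-style rounding: combine a uniformly random assignment with a correlated Gaussian-projection rounding guided by the SDP vectors, and optimize the balance between the two. The delicate point is that the SDP exposes only pairwise correlations whereas $V_q$ has arity $k$; one must show that this pairwise information already beats the trivial $2^{-k}$-type bound by the claimed factor, with the constant $0.62$ emerging from optimizing a Gaussian integral in the rounding. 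For $k=3$ I would instead give a tailored rounding achieving $\rho = 1/2$, exploiting the special Fourier structure of arity-$3$ predicates.

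Given the lemma, soundness is bookkeeping: on a no-instance, $\Opt \le s \le \frac{0.62k}{2^k}(1-\epsilon) \le \rho(1-\epsilon)$, whence $\nu \le \Opt/\rho \le 1-\epsilon$, and using the slack in the $\epsilon$ factor one gets $\nu < 1-\epsilon$, so the procedure rejects; the same computation with $\rho = 1/2$ yields the stated $k=3$ bound $s \le 1/2-\epsilon$. Thus the entire quantitative content is concentrated in the rounding ratio $\rho$, while instance-size accounting and SDP solvability handle the rest.
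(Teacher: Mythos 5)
Your proposal is correct and follows essentially the same route as the paper: both blow the game up into an exponential-size Max $k$-CSP over the provers' answer tables and then invoke an approximation algorithm with ratio $0.62k/2^k$ (resp.\ $1/2$ for $k=3$) to separate value $\ge 1-\epsilon$ from value $\le s$ in $\EXP$ time. The only difference is that the paper simply cites the Makarychev--Makarychev and Zwick SDP-based algorithms as black boxes, whereas you sketch (without fully proving) the rounding analysis that underlies them.
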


Finally, we prove that for $k \ge 3$ and sufficiently high soundness
error, $k$-prover 1-bit $\MIP$s capture all of $\NEXP$.  This follows
by using the $\PCP$ analogue of the classic $\textsc{MIP} = \NEXP$
result \cite{BFL91}.  We sharpen the parameters by using more modern
$\PCP$ machinery and then observing that the $\PCP$s we use can be
turned in to $\MIP$ at no cost.  In particular using the recent
results by Chan \cite{Chan12}, we get
\begin{theorem}
    \label{thm:nexp_bound}
    For every $\epsilon > 0 $ and $s = 2^{\lceil \log {(k+1)}\rceil}/2^k+\epsilon \le 2k/2^k + \epsilon$,
    $$\MIP[k, 1-\epsilon, s] = \NEXP.$$
\end{theorem}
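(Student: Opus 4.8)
The statement asserts an equality, so I treat the two inclusions separately.

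For the upper bound $\MIP[k,1-\epsilon,s]\subseteq\NEXP$, I would use the standard guess-and-check argument underlying $\MIP=\NEXP$; note this direction never exploits the one-bit restriction (which only makes the provers weaker). A nondeterministic exponential-time machine guesses the $k$ prover strategies, each of which is a table mapping the polynomially long (hence exponentially many) possible questions to a single bit. Having fixed the strategies, the acceptance probability is a fixed number computable exactly in exponential time by summing the verifier's decision over all of its exponentially many random strings (the verifier flips only polynomially many coins). Since completeness $1-\epsilon$ strictly exceeds the soundness $s$, the machine accepts iff some guessed strategy profile yields acceptance probability at least $1-\epsilon$, placing the language in $\NEXP$.

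The interesting direction is $\NEXP\subseteq\MIP[k,1-\epsilon,s]$. The plan is to start from a $k$-query $\PCP$ for $\NEXP$ and convert it, at no loss, into a $k$-prover one-bit game. Concretely, I would invoke the scaled-up (exponential-proof-length) version of Chan's $\PCP$ \cite{Chan12}: for the given $k$ it yields a nonadaptive verifier that flips polynomially many coins, reads $k$ bits of a single binary proof, has completeness at least $1-\epsilon$, and whose acceptance predicate is satisfied by exactly $2^{\lceil\log(k+1)\rceil}$ of the $2^k$ possible answer tuples, so that (the predicate being approximation resistant) its soundness equals the density $2^{\lceil\log(k+1)\rceil}/2^k$ of satisfying tuples plus $\epsilon$. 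This is precisely the target soundness $s$.

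To turn this $\PCP$ into an $\MIP$ I would have the verifier simulate the $\PCP$ verifier's coins, compute the $k$ query positions $q_1,\dots,q_k$, send $q_i$ to prover $i$, receive one bit $b_i$ from each, and accept iff $(b_1,\dots,b_k)$ satisfies the $\PCP$ predicate. For completeness, honest provers all answer according to a satisfying proof $\pi$, so the verifier accepts with probability at least $1-\epsilon$. The crucial point for soundness is that a strategy profile $(f_1,\dots,f_k)$ of the provers must correspond to a single $\PCP$ proof. This holds when the proof is partitioned into $k$ parts so that the $i$-th query always addresses part $i$: then prover $i$'s strategy $f_i$ is exactly an assignment to part $i$, a profile $(f_1,\dots,f_k)$ is in bijection with a single proof $\pi=(f_1,\dots,f_k)$, and the $\MIP$ acceptance probability under any profile equals the $\PCP$ acceptance probability under the corresponding $\pi$. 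Consequently soundness transfers verbatim, giving soundness error $s$.

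The main obstacle is establishing this partition property for the $\PCP$ invoked: one must check that Chan's construction can be arranged so that the $k$ queried locations lie in $k$ separately addressed tables, one per prover, so that no two provers are ever asked about the same location and inconsistent answers cannot help. This is exactly what makes the reduction cost-free, and it is where the single-bit answer format is essential --- each prover returns precisely the one proof bit the $\PCP$ verifier would have read. Once the partition is in place, the two inclusions combine to give $\MIP[k,1-\epsilon,s]=\NEXP$.
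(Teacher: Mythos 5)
Your proposal follows essentially the same route as the paper: take the exponential-length scaling of Chan's $k$-query $\PCP$ with the approximation-resistant predicate having $2^{\lceil\log(k+1)\rceil}$ accepting configurations, and observe that it is really a $k$-prover one-bit game. The ``main obstacle'' you flag --- arranging the proof into $k$ separately addressed tables so that a prover profile is in bijection with a single proof --- is in fact already resolved at no cost: Chan analyzes his $\PCP$ directly in the $k$-partite model where each of the $k$ bits is read from its own table, so the translation to $\MIP$ is purely syntactic, exactly as you hoped. The one point you do not address, and which the paper treats as the only genuinely nontrivial issue in the scaling, is why the upscaled verifier still runs in \emph{polynomial} time even though the proof, the number of coins, and the question length are all exponentially larger than in the standard $\PCP$ setting; this requires that verifying an $\NEXP$ statement against a suitable encoding of the input takes time polynomial in the input but not in the other proof parameters, a fact the paper imports from the literature on scaled-up $\PCP$s rather than reproving. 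Your $\NEXP$ upper bound by guessing the $k$ strategy tables and computing the acceptance probability exactly is standard and fine.
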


Taken together, these results demonstrate that $k$-prover games
provide a natural ``quantitative'' way
to relate complexity classes such as $\BPP,\SZK,\AM,\EXP$ and $\NEXP$. 
We leave open the question of whether  $\MIP[k, 1-\epsilon, s]$
contains an even more fine grained hierarchy of complexity classes in
the regime where $s \geq 2/2^k + \epsilon$.

\subsection{Related work}
The work most closely related to our is the work by
Goldreich, Vadhan and Wigderson \cite{GVW02} mentioned above which in turn builds on a
work by Goldreich and H\aa stad \cite{GH98}; just as we do, both these
works investigate the complexity of interactive proofs with
``laconic'' provers.  We have taken the question to an extreme in one
direction (namely we focus only on provers that send a single bit); on
the other hand, we have generalized the question by considering
multi-prover interactive proofs, rather than just a single prover (as
is the main focus in the above-mentioned works).

The large literature on $\PCP$ characterizations of $\NP$ (e.g.,
\cite{AS98,ALMSS98,BGLR94,BGS98,GLST98,ST00} and many others) is clearly also very related. As mentioned, a $k$-prover $\MIP$
yields a $k$-query $\PCP$ with the same soundness error, but of
exponential length; typically, the $\PCP$ literature focuses on
polynomial-length proofs. Nonetheless, we rely on both $\PCP$s and
techniques from this literature (most notably Fourier analysis) to
analyze our proof system. 

We also mention the recent work by Drucker \cite{Dru11} that provides a 
$\PCP$-type characterization of $\AM$; his result is incomparable
to our main theorem as he focuses on polynomial-length PCP proofs.

\subsection{Outline}

In Section~\ref{sec:preliminaries} we present some definitions and
background material that we use.  In Section~\ref{sec:szk} we prove
Theorem~\ref{thm:szk_range} for the $\SZK$ range.  Our main result
Theorem~\ref{main.thm} is proved in Section~\ref{sec:am}.  The Theorems~\ref{thm:exp_bound} and \ref{thm:nexp_bound} are proved in Section~\ref{sec:exp}.
Finally, we end with discussing some avenues for future work in
Section~\ref{sec:conclusion}.

  \section{Preliminaries}
  \label{sec:preliminaries}

  \subsection{Laconic Proof systems}

  We assume familiarity with multi-prover interactive proofs and
  probabilistically checkable proofs.

  \begin{definition}
    $\IP[k,c,s]$ denotes the class of problems having an  two message
protocol where the first message is sent by the Verifier and where
the prover sends at most $k$ bits and where the proof has
soundness $s$ and completeness $c$.
  \end{definition}

  \begin{definition}
    $\MIP[k,c,s]$ denotes the set of languages having a Multi-prover
    Interactive Proof System with $k$ provers, each sending a single
    bit, soundness $s$, completeness $c$.  The questions to the
    $k$ provers are asked simultaneously.  In other words, all questions
    are formulated before any answer is recieved.
  \end{definition}
    
  \begin{fact}
    For every $k \ge 1$, $0 \le s < c \le 1$, we have
    $$\IP[k,c,s] \subseteq \MIP[k,c,s] .$$
  \end{fact}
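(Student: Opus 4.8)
The plan is to emulate any single-prover protocol directly with $k$ provers, by having the $\MIP$ verifier send the \emph{same} question to all $k$ provers and interpreting the $i$-th prover's single bit as the $i$-th bit of the single prover's $k$-bit answer. I would start from an $\IP[k,c,s]$ verifier $V$: on input $x$ and randomness $r$ it produces a question $q = q(x,r)$, sends it to the lone prover, receives a $k$-bit answer $(a_1,\dots,a_k)$, and accepts or rejects according to some predicate applied to $(x,r,q,a_1,\dots,a_k)$.

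I would then define an $\MIP$ verifier $V'$ that, using the same randomness $r$, computes the same $q$ and sends $q$ to each of the $k$ provers simultaneously. This trivially satisfies the requirement that all questions be formulated before any answer is received, since the questions are in fact identical. After receiving a single bit $a_i$ from prover $i$, the verifier $V'$ applies exactly the same accept/reject predicate as $V$ to the tuple $(x,r,q,a_1,\dots,a_k)$. The correspondence between a $k$-bit single-prover answer and $k$ one-bit provers is the crux of the argument.

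For completeness, given an optimal $\IP$ prover $P$ witnessing $x \in L$, I would let prover $i$ respond with the $i$-th bit of $P(q)$; since all provers see the same $q$, their joint answer is exactly $P(q)$, so $V'$ accepts with the same probability, namely at least $c$. For soundness, I would take arbitrary $\MIP$ strategies $P_1^*,\dots,P_k^*$, which by the usual averaging argument may be assumed deterministic, i.e.\ functions of the received question alone, and assemble them into a single $\IP$ prover $P^*(q) = (P_1^*(q),\dots,P_k^*(q))$. Because $V'$ asks every prover the same question and decides identically to $V$, the $\MIP$ acceptance probability of $(P_1^*,\dots,P_k^*)$ equals the $\IP$ acceptance probability of $P^*$, which is at most $s$ when $x \notin L$; hence the $\MIP$ soundness cannot exceed $s$ and $L \in \MIP[k,c,s]$.

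There is essentially no hard step here; the only point deserving a moment's care is justifying that the separate provers gain nothing over a single prover. This holds precisely because the construction hands all of them identical questions, so their joint response ranges over exactly the arbitrary functions of that one question, which is the same strategy space available to a single prover, rendering their inability to communicate irrelevant.
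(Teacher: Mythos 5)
Your proof is correct and is exactly the standard argument intended here; the paper states this as a Fact without proof, and your construction (identical questions to all $k$ provers, with the $i$-th prover supplying the $i$-th bit of the single prover's answer, plus the observation that the joint strategy space then coincides with the single prover's) is the natural way to fill it in.
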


When constructing $\MIP$ it is convenient to rely on efficient
$\PCP$s.  There are general translations from $\PCP$s to $\MIP$s
(one is given in \cite{BGS98}) if one accepts a slight loss
in the parameters.  In the cases we are interested
in, however, by a slight extension of the analysis we can turn the $\PCP$
directly into a $\MIP$ without any loss in parameters.

  \subsection{Statistical Zero Knowledge}

  For our characterization of the $\SZK$ range, we only need to rely
  on the following result of \cite{GVW02} relating $\SZK$ to laconic
  $\IP$ systems.

  \begin{theorem}[\cite{GVW02}, Theorem 3.1]
    \label{thm:ip=szk}
    For every $c$, $s$ such that $1 > c^2 > s > c/2 > 0$, it holds that
    $\IP[1, c, s] = \SZK$.
  \end{theorem}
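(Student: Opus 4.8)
The plan is to prove both inclusions through the $\SZK$-complete Statistical Difference problem $\mathrm{SD}_{\alpha,\beta}$, which asks one to distinguish pairs of circuit-sampleable distributions $(D_0,D_1)$ with statistical distance $\Delta(D_0,D_1) \ge \alpha$ from those with $\Delta(D_0,D_1) \le \beta$. By the Sahai--Vadhan polarization lemma this promise problem is $\SZK$-complete precisely when $\alpha^2 > \beta$. The two-sided bound will then follow once I translate the laconic-proof parameters $(c,s)$ into distance thresholds $(\alpha,\beta)$ and verify that the stated window $1 > c^2 > s > c/2 > 0$ is exactly what makes the completeness condition $\alpha^2 > \beta$ applicable.

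For the inclusion $\SZK \subseteq \IP[1,c,s]$ I would give a laconic proof for $\mathrm{SD}_{\alpha,\beta}$ directly. On input $(D_0,D_1)$ the verifier secretly flips $i \in \bool$, draws a sample $y \sim D_i$, and sends $y$ to the prover; the prover replies with a single bit $b$, and the verifier accepts iff $b = i$. The optimal prover outputs the maximum-likelihood guess, so the acceptance probability is exactly $\tfrac12(1 + \Delta(D_0,D_1))$. Hence yes-instances are accepted with probability $\ge \tfrac12(1+\alpha)$ and no-instances with probability $\le \tfrac12(1+\beta)$. Applying the polarization lemma to push $(\alpha,\beta)$ apart before running the game, together with standard transformations of the distributions, then lets me realise any target pair $(c,s)$ lying inside the window.

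For the converse $\IP[1,c,s] \subseteq \SZK$ I would analyze the optimal one-bit prover. Folding the verifier's coins, each random string $r$ determines a message $m(r)$ and, for each possible answer bit $b$, an acceptance probability $q_b(m)$. Since the prover sees only $m$, its best strategy answers the bit $b$ maximising $q_b(m)$, so the optimal acceptance probability is $\E_m[\max(q_0(m),q_1(m))]$. Writing $\max(q_0,q_1) = \tfrac12(q_0+q_1) + \tfrac12\lvert q_0-q_1\rvert$ splits this into a ``random-bit'' baseline $\tfrac12\E_m[q_0+q_1]$ and a distance term $\tfrac12\E_m[\lvert q_0-q_1\rvert]$; the latter is precisely the statistical distance between two circuit-realisable distributions that sample $m$ and then accept or reject according to $q_0$ versus $q_1$. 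The completeness/soundness gap thus becomes a gap on this statistical distance, yielding an $\mathrm{SD}_{\alpha,\beta}$ instance, and membership in $\SZK$ follows from completeness of $\mathrm{SD}$.

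The main obstacle I expect is the constant-chasing in this last reduction. I must control the baseline term tightly enough to isolate the distance gap (this is where $s > c/2$ enters, ruling out the degenerate regime in which a trivial random-bit prover alone already meets the soundness bound), and I must check that the induced thresholds satisfy $\alpha^2 > \beta$ (this is where $s < c^2$ enters). Making the constants line up so that the $\SZK$-completeness condition holds uniformly across the whole window, rather than at a single convenient point, is the delicate part: amplification of the distance interacts \emph{multiplicatively} with the thresholds, which is exactly why the boundary of the window appears as $c^2$ rather than $c$.
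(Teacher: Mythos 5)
First, a point of reference: the paper does not prove this statement at all --- it is imported verbatim from \cite{GVW02} (their Theorem 3.1) --- so your proposal has to be judged against the known proof, whose overall route (reduction to the $\SZK$-complete Statistical Difference problem plus the Sahai--Vadhan polarization lemma) you have correctly reconstructed. Your forward direction is essentially right: the guess-the-distribution protocol accepts with probability exactly $\tfrac12(1+\Delta(D_0,D_1))$, and after polarization, mixing the game with automatic accepts/rejects realises any target pair in the window. Note, however, that this is where $s>c/2$ is actually consumed (the achievable ratios $s/c$ for such protocols are bounded below by essentially $1/2$), not in the converse as you suggest: when $s\le c/2$ the class collapses to $\BPP\subseteq\SZK$, so the inclusion $\IP[1,c,s]\subseteq\SZK$ needs no such hypothesis.

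The genuine gap is in the converse direction. You reduce to the pair of distributions $(m,\mathrm{Bern}(q_0(m)))$ versus $(m,\mathrm{Bern}(q_1(m)))$, whose statistical distance is $\E_m\lvert q_0(m)-q_1(m)\rvert$, i.e.\ (twice) your distance term $d$. But the completeness/soundness hypotheses constrain only the optimal value $v=\E_m[\max(q_0,q_1)]=a+d$, and the baseline $a=\tfrac12\E_m[q_0+q_1]$ is instance-dependent, so a gap in $v$ does \emph{not} yield a gap in $d$. Concretely: a protocol whose verifier ignores the prover's bit ($q_0\equiv q_1$ identically), accepting with probability $\ge c$ on yes-instances and $\le s$ on no-instances, is a legitimate $\IP[1,c,s]$ protocol for a $\BPP$ language, yet your map sends \emph{every} input to an $\mathrm{SD}$ instance of distance $0$; no choice of thresholds $(\alpha,\beta)$ and no amount of constant-chasing can repair a reduction whose output is identical on yes- and no-instances. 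The fix --- and the real content of the theorem --- is to encode the baseline into the constructed distributions themselves, so that their distance tracks the full value, roughly $2v-1$; with thresholds $\alpha=2c-1$ and $\beta=2s-1$ (the window forces $c>1/2$, so $\alpha>0$) one then gets $\alpha^2-\beta=2(c^2-s)+2(1-c)^2>0$, which is exactly where $s<c^2$, and nothing weaker, enters, and why the boundary of the window is $c^2$ rather than $c$. As written, your step ``the completeness/soundness gap thus becomes a gap on this statistical distance'' is false, and the hedge about ``controlling the baseline term'' names the problem without supplying the missing construction.
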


  \subsection{Fourier Analysis of Boolean Functions}

  For two vectors $x, y \in \{0,1\}^n$ we write $x \oplus y$ for their
  pointwise sum modulo $2$.  Given $a \in \bool^n$
  we write $\chi_{a}: \bool^n \rightarrow \R$ for the \emph{character}
(which is in fact a linear function) $\chi_{a}(x) = (-1)^{\sum_{i=1}^n a_i x_i}$.

  Any Boolean function $f: \{0,1\}^n \rightarrow \R$ can be uniquely
  decomposed as a linear combination of characters
  \[
  f(x) = \sum_{a \in \bool^n} \hat{f}(a) \chi_a(x),
  \]
  where $\hat{f}(a) = \E_{x}[f(x) \chi_a(x)]$ are
  the \emph{Fourier coefficients} of $f$.

  We recall Plancherel's equality: for any $f: \bool^n \rightarrow \R$, we have
  $$\sum_{a} \hat{f}(a)^2 = \E_{x}[f(x)^2].$$

  \subsection{Inapproximability of Linear Equations}

  Our proof system for $\AM$ is based on the optimal inapproximability
  result for linear equations mod $2$ by H\aa stad \cite{Has01},
  defined next.

  \begin{definition}
    An instance $\Psi$ of \MaxLin{} consists of a set of equations in
    $n$ variables $x_1, \ldots, x_n$ over $\bool$.  Each equation is
    of the form $\chi_l(x) = b$ for some $l \in \{0,1\}^n$ of weight
    $3$ and some $b \in \{-1,1\}$.  We denote by $\Opt(\Psi) \in
    [0,1]$ the maximum fraction of equations satisfied by any
    assignment to $x$.
  \end{definition}

  \begin{theorem}[\cite{Has01}]
    For every $\epsilon > 0$, given a \MaxLin{} instance $\Psi$, it is
    NP-hard to determine whether $\Opt(\Psi) \le 1-\epsilon$ or
    whether $\Opt(\Psi) \ge \frac{1 + \epsilon}{2}$.
  \end{theorem}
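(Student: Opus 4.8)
The plan is to reduce from a hardness-amplified two-prover one-round game, composed with the Long Code and H\aa stad's three-query linear test; this is exactly the route of \cite{Has01}. First I would apply the PCP theorem to an arbitrary \NP{} language to obtain a two-prover one-round game (a \textsc{Label Cover} instance) with perfect completeness and soundness bounded away from $1$, and then invoke the parallel repetition theorem to push the soundness below an arbitrarily small constant $\eta$ while keeping the answer alphabets $\Sigma_U,\Sigma_W$ of polynomial size. Each edge carries a projection constraint $\pi:\Sigma_U\to\Sigma_W$, consistency meaning $\pi(\sigma_U)=\sigma_W$.

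For the reduction I would attach to each vertex a table that is supposed to be the Long Code of its label: $B:\bool^{\Sigma_U}\to\{-1,1\}$ for $U$-vertices and $A:\bool^{\Sigma_W}\to\{-1,1\}$ for $W$-vertices, each \emph{folded} over negation so that every admissible table satisfies $A(\bar x)=-A(x)$; this annihilates all even-size Fourier coefficients, in particular $\hat A(\emptyset)=0$. The equations of $\Psi$ are then generated by H\aa stad's test: pick a random edge with projection $\pi$, pick $f$ uniformly on the $W$-cube and $g$ uniformly on the $U$-cube, pick a noise $\mu$ flipping each coordinate independently with probability $\delta$, set $h(a)=f(\pi(a))\,g(a)\,\mu(a)$, and write down the weight-$3$ equation $A(f)\,B(g)\,B(h)=1$. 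Taking $\delta=\epsilon$, the honest Long Codes of a satisfying labeling fail only when $\mu$ flips the relevant coordinate, so each equation is satisfied with probability $1-\delta$; hence in the yes-case $\Opt(\Psi)\ge 1-\epsilon$.

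The heart of the proof is the soundness analysis, which I would carry out by Fourier expansion. Writing the acceptance probability as $\tfrac12+\tfrac12\,\E[A(f)B(g)B(h)]$ and expanding the three tables into characters, the expectation over uniform $f$, $g$ and the noise $\mu$ kills every term except the matched ones, leaving (per edge) a sum of the form $\sum_{\gamma}\hat A(\pi_2(\gamma))\,\hat B(\gamma)^2\,(1-2\delta)^{|\gamma|}$, where $\pi_2(\gamma)$ is the set of $W$-labels having an odd number of $\pi$-preimages in $\gamma$; folding forces $\pi_2(\gamma)\ne\emptyset$, hence $\gamma\ne\emptyset$. If the acceptance probability exceeds $\tfrac{1+\epsilon}{2}$, this matched Fourier mass is at least $\epsilon$ on average over edges, and the decaying factor $(1-2\delta)^{|\gamma|}$ confines the dominant contribution to $|\gamma|=O(\tfrac1\delta\log\tfrac1\epsilon)$. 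I would then decode a randomized labeling by choosing $\gamma$ (resp.\ $\alpha$) with probability $\hat B(\gamma)^2$ (resp.\ $\hat A(\alpha)^2$), legitimate by Plancherel since $\sum_\gamma\hat B(\gamma)^2=1$, and outputting a uniform element of the chosen set.

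The main obstacle is exactly this Fourier bookkeeping and its conversion into a labeling: correctly collapsing the cross terms under the projection, using folding to discard the empty frequency, and showing via a Cauchy--Schwarz estimate that the decoded labeling satisfies a fraction $\poly(\epsilon,\delta)$ of the edges, which contradicts the \textsc{Label Cover} soundness once $\eta$ is chosen below this bound. Setting $\delta$ and the repetition count as functions of $\epsilon$ then delivers the claimed gap between completeness $1-\epsilon$ and soundness $\tfrac{1+\epsilon}{2}$.
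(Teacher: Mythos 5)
The paper does not actually prove this theorem---it imports it as a black box from \cite{Has01}---and your sketch is a faithful outline of H{\aa}stad's own argument: Label Cover from the PCP theorem plus parallel repetition, folded Long Codes, the noisy three-query linear test $A(f)B(g)B(h)=1$, the Fourier collapse to $\sum_\gamma \hat A(\pi_2(\gamma))\hat B(\gamma)^2(1-2\delta)^{|\gamma|}$, and the randomized decoding controlled by Cauchy--Schwarz and the noise decay. The steps you describe are the standard ones and are correct, so the proposal matches the (cited) proof rather than offering a different route.
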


  \section{The $\SZK$ range}
  \label{sec:szk}

  \begin{theorem}
    For every $k \ge 1$, $\epsilon > 0$, we have
    $$\IP[k,1-\epsilon,1/2^k+\epsilon] \supseteq \SZK.$$
  \end{theorem}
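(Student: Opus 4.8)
The plan is to derive the containment from the Goldreich--Vadhan--Wigderson characterization (Theorem~\ref{thm:ip=szk}) by amplifying a single-bit proof system into a $k$-bit one via parallel repetition. First I would instantiate Theorem~\ref{thm:ip=szk} with parameters tuned so that $k$ repetitions land exactly in the target window. Fix small constants $\delta_1,\delta_2 > 0$ and set $c = 1 - \delta_1$ and $s = \tfrac{1}{2} + \delta_2$. For $\delta_1,\delta_2$ small enough these satisfy $1 > c^2 > s > c/2 > 0$ (note $c/2 = \tfrac12 - \tfrac{\delta_1}{2} < s$, while $c^2 \to 1$), so Theorem~\ref{thm:ip=szk} yields $\SZK \subseteq \IP[1,c,s]$. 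Hence every $L \in \SZK$ has a two-message protocol $\Pi$ in which the prover sends a single bit, with completeness $c$ and soundness $s$.

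Next I would define the $k$-bit verifier $V'$ that runs $k$ independent copies of $\Pi$ in parallel: it samples $k$ independent verifier random strings, sends all $k$ challenges simultaneously, receives the $k$ one-bit answers, and accepts iff every copy accepts. This remains a two-message protocol in which a single prover sends $k$ bits, so it witnesses membership in $\IP[k,\cdot,\cdot]$. For completeness, the honest prover answers copy $i$ with $\Pi$'s honest strategy on the $i$-th challenge; since the copies use independent verifier randomness the per-copy acceptance events are independent, so the overall acceptance probability is at least $c^k = (1-\delta_1)^k \ge 1 - k\delta_1$, which is $\ge 1-\epsilon$ once $\delta_1 \le \epsilon/k$.

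The step that requires the most care is soundness, where I must show the error drops to $s^k$. The reason this is clean here --- and sidesteps the usual parallel-repetition subtleties --- is that $\Pi$ is a \emph{two-message single-prover} protocol. Writing $g(q_i,a_i)$ for the probability that copy $i$ accepts given its challenge $q_i$ and answer $a_i$ (the verifier randomness being conditionally independent across copies given the challenges), the acceptance probability of any prover strategy factors as $\E_{q_1,\ldots,q_k}\bigl[\prod_i g(q_i,a_i)\bigr]$. Since the $i$-th factor depends only on $q_i$ and $a_i$, the product is maximized factor-by-factor, so the optimal prover is non-adaptive across copies and chooses each $a_i$ as the best response to $q_i$ alone; the value is therefore exactly $\bigl(\E_q[\max_a g(q,a)]\bigr)^k = s^k$. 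This is the crux of the argument, and the place I would spell out most carefully.

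Finally, $s^k = 2^{-k}(1+2\delta_2)^k$, and because $k$ is a constant I can take $\delta_2$ small enough that $(1+2\delta_2)^k \le 1 + 2^k\epsilon$, giving soundness $\le 1/2^k + \epsilon$. Combined with the completeness bound, this places $L$ in $\IP[k,1-\epsilon,1/2^k+\epsilon]$, which is what we wanted. The only genuinely nontrivial ingredient is the non-adaptivity argument for soundness; everything else is parameter bookkeeping in the choice of $\delta_1,\delta_2$.
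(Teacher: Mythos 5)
Your proposal is correct and follows the same route as the paper: the paper's proof is the one-line observation that the result ``follows by repetition of the protocol from Theorem~\ref{thm:ip=szk} and the fact that there is no problem with parallel repetition for one-prover proof systems,'' and your write-up simply fills in the parameter choices and the standard factorization argument showing that soundness multiplies exactly for two-message single-prover protocols. (Minor nit: the value of the repeated game is $\bigl(\E_q[\max_a g(q,a)]\bigr)^k \le s^k$ rather than exactly $s^k$, but only the upper bound is needed.)
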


  \begin{proof}
    Follows by repetition of the protocol from
    Theorem~\ref{thm:ip=szk} and the fact that
    there is no problem with parallel repetition for
    one-prover proof systems.
  \end{proof}

  \begin{proposition}
    \label{prop:mip2ip}
    For every $k \ge 1$, $0 \le s \le c \le 1$, we have
    $$\MIP[k, c, s] \subseteq \IP[1, c, 2^{k-1}s].$$
  \end{proposition}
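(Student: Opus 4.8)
The plan is to exhibit a single-prover, one-bit protocol that ``queries one of the $k$ provers at random'' and then accepts whenever the received bit can be extended to a full accepting answer-tuple. Concretely, suppose the given $k$-prover one-bit system has a verifier that samples a question-tuple $\vec q = (q_1,\dots,q_k)$ and accepts according to a predicate $V(\vec q, \vec a)$. The single-prover verifier would generate $\vec q$ exactly as before, pick $i \in \{1,\dots,k\}$ uniformly at random, and send the pair $(i,q_i)$ to the lone prover. The prover replies with one bit $b$, and the verifier accepts iff there exists a completion $a_{-i} \in \bool^{k-1}$ with $V(\vec q,(b,a_{-i})) = 1$. Since the prover's outgoing message is a single bit, this is an $\IP[1,\cdot,\cdot]$ protocol.

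Completeness is the easy direction and I would dispatch it first. Fixing an optimal $k$-prover strategy $(P_1,\dots,P_k)$ for a yes-instance, the honest single prover answers $b = P_i(q_i)$ on message $(i,q_i)$. Whenever the original system accepts, which happens with probability at least $c$, the tuple $a_{-i} = (P_j(q_j))_{j \ne i}$ already witnesses the required existential completion, and this holds for every choice of $i$; hence the single-prover completeness stays at least $c$.

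The hard part is soundness, where I must show the existential relaxation inflates the error by a factor of at most $2^{k-1}$. The key observation is that a one-bit prover's behavior decomposes: its reply on $(i,q_i)$ defines, for each $i$, a function $f_i : q_i \mapsto b$ depending only on $q_i$, which is exactly a legal $k$-prover strategy. Writing $W_i(\vec q, a_i)$ for the indicator that $a_i$ extends to some accepting tuple, the single-prover value equals $\frac1k \sum_i \Pr_{\vec q}[W_i(\vec q, f_i(q_i)) = 1]$. To bound this by $2^{k-1}s$, I would build from the $f_i$ the following randomized $k$-prover strategy (using shared randomness): sample $J$ uniformly, let prover $J$ answer $f_J(q_J)$, and let every other prover answer an independent uniform bit. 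Conditioned on $J$ and $\vec q$, whenever $f_J(q_J)$ is extendable the $k-1$ random bits hit a valid completion with probability at least $2^{-(k-1)}$, so this strategy succeeds with probability at least $2^{-(k-1)} \cdot \frac1k \sum_i \Pr_{\vec q}[W_i(\vec q,f_i(q_i))=1]$, i.e. $2^{-(k-1)}$ times the single-prover value. Since soundness bounds the value of every $k$-prover strategy (including randomized ones, by averaging over the shared randomness) by $s$, this yields that the single-prover value is at most $2^{k-1}s$.

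The one subtlety I would emphasize is that the ``guess the other $k-1$ answers'' step appears only in the soundness analysis, as a device to manufacture a $k$-prover strategy, and never in the actual protocol. Had the verifier itself guessed those bits, completeness would have dropped by the same $2^{k-1}$ factor; the existential acceptance rule is precisely what avoids this. This asymmetry, namely existential completion in the protocol versus random completion in the extracted strategy, is what simultaneously preserves completeness exactly and loses only $2^{k-1}$ (rather than $2^{k}$) in the soundness.
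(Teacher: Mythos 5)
Your proof is correct and follows essentially the same approach as the paper: the single-prover verifier accepts iff the received bit extends to an accepting tuple, and soundness is argued by converting the lone prover into a $k$-prover strategy whose remaining answers are uniformly random, losing a factor $2^{k-1}$. The only (cosmetic) difference is that you query a uniformly random prover $i$ while the paper simply always queries prover $1$; both work, and the random-index version just adds an unnecessary layer of averaging.
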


  \begin{proof}
    Given a $\MIP$ protocol $(V, P_1, \ldots, P_k)$ for a
    language $L$, we construct a single-prover protocol $(V', P')$ as
    follows.  The verifier $V'$ runs $V$ to generate $k$ messages
    $x_1, \ldots, x_k$, and sends $x_1$ to the prover $P'$.  The
    prover $P'$ acts as $P_1$ and responds with an answer $y_1 \in
    \{0,1\}$. $V'$ accepts iff there are bits $y_2, \ldots, y_k$ such
    that the original verifier $V$ accepts on the transcript $(x_1,
    \ldots, x_k, y_1, \ldots, y_k)$.  Clearly, the completeness of
    $(V', P')$ is at least that of the original protocol.  For the
    soundness, suppose that there is a strategy for $P'$ that makes
    the verifier accept with probability $s'$.  Construct a strategy
    for the original protocol by having $P_1$ act as $P'$ and $P_2,
    \ldots, P_k$ return random answers.  Clearly, these provers make
    $V$ accept with probability at least $s'/2^{k-1}$, implying $s'
    \le 2^{k-1}s$ as desired.
  \end{proof}

  \begin{theorem}
    \label{thm:mip-in-szk}
    For every $k \ge 1$, and every $\epsilon > 0$ it holds that
    $$\MIP[k, 1-\epsilon, 2/2^k(1-2\epsilon)] \subseteq \SZK$$
  \end{theorem}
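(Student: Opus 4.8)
The plan is to reduce the $k$-prover system to a single-prover one and then apply the Goldreich--Vadhan--Wigderson characterization of $\SZK$. First I would invoke Proposition~\ref{prop:mip2ip} with completeness $c = 1-\epsilon$ and soundness $s = \frac{2}{2^k}(1-2\epsilon)$, which yields
$$
\MIP[k, 1-\epsilon, \tfrac{2}{2^k}(1-2\epsilon)] \subseteq \IP\!\left[1,\, 1-\epsilon,\, 2^{k-1}\cdot\tfrac{2}{2^k}(1-2\epsilon)\right].
$$
The key arithmetic observation is that the soundness blowup factor $2^{k-1}$ combines exactly with the $2/2^k$: since $2^{k-1}\cdot\frac{2}{2^k} = 1$, the resulting single-prover soundness is precisely $1-2\epsilon$. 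It therefore remains to show $\IP[1, 1-\epsilon, 1-2\epsilon] \subseteq \SZK$.

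Second, I would apply Theorem~\ref{thm:ip=szk}, which gives $\IP[1,c,s] = \SZK$ whenever $1 > c^2 > s > c/2 > 0$, instantiated with $c = 1-\epsilon$ and $s = 1-2\epsilon$. Checking the four conditions: $c/2 > 0$ is immediate; $1 > c^2$ follows from $c = 1-\epsilon < 1$; the inequality $c^2 > s$ reads $(1-\epsilon)^2 = 1 - 2\epsilon + \epsilon^2 > 1 - 2\epsilon$, which holds by the $\epsilon^2$ slack; and $s > c/2$ unfolds to $1 - 2\epsilon > (1-\epsilon)/2$, i.e.\ $\epsilon < 1/3$, which holds since $\epsilon$ is an arbitrarily small constant. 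Combining the two steps gives the claimed containment.

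I expect no genuine obstacle here: both ingredients are already available, and the argument is essentially a parameter verification. The one point that requires a little care is confirming that the post-reduction soundness $1-2\epsilon$ falls strictly inside the GVW window $(c/2,\, c^2)$ rather than on its boundary. Both the upper constraint $s < c^2$ (which survives only because of the $\epsilon^2$ gap) and the constraint $c^2 < 1$ (which forces $c = 1-\epsilon < 1$, and thereby explains why the construction uses near-perfect rather than perfect completeness) are what make the characterization applicable, so the theorem follows.
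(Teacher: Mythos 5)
Your proposal is correct and follows exactly the paper's own two-line argument: apply Proposition~\ref{prop:mip2ip} to collapse to a single prover with soundness $2^{k-1}\cdot\frac{2}{2^k}(1-2\epsilon)=1-2\epsilon$, then invoke Theorem~\ref{thm:ip=szk}. The only difference is that you explicitly verify the GVW window conditions (including the $\epsilon^2$ slack and the requirement $\epsilon<1/3$), which the paper leaves implicit.
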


  \begin{proof}
    We have
    \begin{align*}
      \MIP[ k, 1-\epsilon, 2/2^k(1-2\epsilon)] &\subseteq \IP[1,1, 1-\epsilon, 1-2\epsilon]  & \text{(Proposition~\ref{prop:mip2ip})} \\
      & \subseteq \SZK & \text{(Theorem~\ref{thm:ip=szk})}
    \end{align*}
  \end{proof}

    \section{Proof systems for $\AM$}
    \label{sec:am}

    First we note that, at a cost of an arbitrarily small loss in
    soundness and completeness, we may restrict ourselves to proof
    systems for $\NP$.
    
    \begin{lemma}
      If $\NP \subseteq \MIP[k,c,s]$ then for every $\epsilon > 0$ it holds that $\AM \subseteq \MIP[k,c-\epsilon,s+\epsilon]$
    \end{lemma}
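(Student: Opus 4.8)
The plan is to use the standard characterization of $\AM$ as the class of languages reducible to $\NP$ by a bounded-error randomized many-one reduction (i.e.\ $\AM = \mathsf{BP}\cdot\NP$). Concretely, for $L \in \AM$ one writes the two-message Arthur--Merlin protocol as: Arthur sends a uniformly random string $r$, Merlin replies with a witness $y$, and Arthur accepts according to a deterministic polynomial-time predicate $V(x,r,y)$. Setting $L' = \{(x,r) : \exists y,\ V(x,r,y)=1\}$, we have $L' \in \NP$, and by the usual parallel amplification of $\AM$ we may assume the reduction has error at most $\delta$ on both sides:
\[
x \in L \;\Rightarrow\; \Pr_r[(x,r)\in L'] \ge 1-\delta, \qquad
x \notin L \;\Rightarrow\; \Pr_r[(x,r)\in L'] \le \delta,
\]
for any $\delta>0$ of our choosing; I will take $\delta = \epsilon$.

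Since $L' \in \NP$ and by hypothesis $\NP \subseteq \MIP[k,c,s]$, there is a $k$-prover one-bit proof system $(V', P'_1,\dots,P'_k)$ for $L'$ with completeness $c$ and soundness $s$. I would build the proof system for $L$ by composing the reduction with this $\MIP$: the verifier first samples $r$ exactly as Arthur would, and then runs $V'$ on the instance $(x,r)$, forwarding its $k$ questions to the provers and accepting iff $V'$ accepts. To keep the honest prover strategy well defined I append the sampled pair $(x,r)$ to each question, so every prover learns which instance it is being asked about; this is harmless, since in the soundness analysis the provers are anyway allowed to know the common input. Crucially, this uses only $k$ provers each sending a single bit, so the format is preserved.

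For completeness, if $x\in L$ then with probability at least $1-\delta$ over $r$ the pair $(x,r)$ lies in $L'$, and conditioned on this the honest provers make $V'$ accept with probability at least $c$; hence the overall acceptance probability is at least $(1-\delta)c \ge c-\delta = c-\epsilon$. For soundness, fix any prover strategies and condition on the sampled $r$. For each $r$ with $(x,r)\notin L'$ --- which, when $x\notin L$, occurs with probability at least $1-\delta$ --- the provers' strategies, restricted to questions carrying this fixed $r$, constitute a legal strategy for $V'$ on the false instance $(x,r)$, so soundness of the $\MIP$ for $L'$ bounds the conditional acceptance probability by $s$. Bounding the remaining (at most $\delta$) probability mass of ``bad'' $r$ trivially by $1$, the total acceptance probability is at most $\delta\cdot 1 + (1-\delta)\cdot s \le s+\delta = s+\epsilon$, which gives $L \in \MIP[k,c-\epsilon,s+\epsilon]$.

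The one point demanding care --- and the main obstacle --- is that the string $r$ playing the role of Arthur's message becomes known to the provers through their questions, so one must check this does not enable cheating. The resolution is that $\MIP$ soundness holds separately for every fixed instance: for each fixed $r$ making $(x,r)$ false, the bound $s$ applies no matter what the provers know, so revealing $r$ can only help them on the negligible fraction of $r$ for which $(x,r)$ is genuinely in $L'$, and by the amplified $\AM$ soundness those contribute at most $\delta=\epsilon$ to the error.
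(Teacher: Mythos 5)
Your proposal is correct and follows essentially the same route as the paper: reduce $L\in\AM$ to an $\NP$ language $L'$ on pairs $(x,r)$, have the verifier sample $r$ and run the assumed $\MIP$ for $L'$, and bound completeness by $(1-\epsilon)c\ge c-\epsilon$ and soundness by $\epsilon+s$. Your extra remark about why revealing $r$ to the provers is harmless (soundness of the $\MIP$ holds for each fixed false instance) is a correct elaboration of a point the paper leaves implicit.
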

    
    \begin{proof}
      Let $L \in \AM$.  We remind the reader that this is equivalent
      to the existence of a language $L' \in \NP$ such that $x\in L$ iff $(x,r)\in L'$
      with high probability for a random string $r$ (of an appropriate
      polynomial length).  Without loss of generality, we may assume that
      the protocol for $L$ has completeness $1-\epsilon$ and soundness
      $\epsilon$.  The $\MIP$ verifier for $L$ simply sends Arthur's
      random string $r$ to each of the $k$ provers and then executes
      the $\MIP$ protocol assumed to exist for $L' \in \NP$.

      If $x \in L$ then with probability $1-\epsilon$ over $r$ we have
      $(x,r) \in L'$ in which case the provers convince the verifier
      with probability $\ge c$.

      On the other hand $x \not\in L$ then the probability that the
      provers accept is at most $\Pr_r[(x,r) \in L'] + \Pr[(x,r) \not \in
        L']\Pr[\text{accept}\,|\,(x,r) \not \in L'] \le \epsilon + s$.
    \end{proof}

  \subsection{Warm-up: the case of $2$ provers}

  We start off with the case of only $2$ provers, as this case is
  somewhat simpler than the general case, and will be used to obtain
  the general case.  

  \begin{theorem}
    For every $\epsilon > 0$
    $$\NP \subseteq \MIP[2,1-\epsilon, 1/2+\epsilon].$$
  \end{theorem}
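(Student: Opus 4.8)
The plan is to build the two-prover system by composing the long-code machinery behind H\aa stad's \MaxLin{} result with a two-prover-one-round (label cover) outer verifier for $\NP$ of arbitrarily small soundness, and then to distribute the three bits that H\aa stad's inner test reads across only two one-bit provers. Recall the shape of that inner test: it fixes an edge $(u,w)$ of the outer instance with projection $\pi = \pi_{uw}$, draws $f \in \{-1,1\}^{L_U}$ and $g \in \{-1,1\}^{L_W}$ uniformly and a noise pattern $\mu \in \{-1,1\}^{L_W}$ (each coordinate $-1$ independently with probability $\delta=\epsilon$), sets $h = (f\circ\pi)\,g\,\mu$, and checks the linear relation $A_u(f)\,A_w(g)\,A_w(h)=1$, where $A_u,A_w$ are the (folded) long codes supplied by the proof. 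For a strong enough outer instance this has completeness $1-\epsilon$ and soundness $\tfrac12+\epsilon$.

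The construction I would use is as follows. Prover $1$ holds the whole proof and, on query $(u,w,f,g)$, is supposed to return the \emph{single bit} $A_u(f)\,A_w(g)$; prover $2$ holds the $W$-tables and, on query $(w,h)$, returns $A_w(h)$. The verifier samples $(u,w),f,g,\mu$ as above, sends $(u,w,f,g)$ to prover $1$ and $(w,h)$ to prover $2$, and accepts iff the two returned bits agree. Since $A_w(h)^2=1$, agreement is exactly H\aa stad's relation. For honest provers $A_u(f)A_w(g)=f(\ell_u)g(\ell_w)$ while $A_w(h)=f(\ell_u)g(\ell_w)\mu(\ell_w)$ (using $\pi(\ell_w)=\ell_u$), so they agree precisely when $\mu(\ell_w)=1$; thus completeness is $1-\delta=1-\epsilon$. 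As the predicate is ``the two bits agree'', random answers pass with probability exactly $\tfrac12$, matching the target $\tfrac12+\epsilon = 2/2^2+\epsilon$.

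For soundness I would allow prover $1$ and prover $2$ to play arbitrary Boolean strategies $B_1(u,w,f,g)$ and $B_2(w,h)$; this is exactly the freedom an $\MIP$ prover has, so nothing is lost relative to a $\PCP$ proof, and the passage from the test to a genuine $\MIP$ is free. Writing the acceptance probability as $\tfrac12+\tfrac12\,\E[B_1B_2]$ and expanding both strategies in their Fourier series, the identity $\chi_\gamma(f\circ\pi)=\chi_{\pi_2(\gamma)}(f)$ (with $\pi_2(\gamma)$ the coordinates of $[L_U]$ hit an odd number of times) collapses the expectation to a single diagonal sum
\[
\E[B_1B_2] \;=\; \E_{(u,w)} \sum_{\gamma} \hat B_1\big(\pi_2(\gamma),\gamma\big)\,\hat B_2(\gamma)\,(1-2\delta)^{|\gamma|}.
\]
The key point is that $\gamma \mapsto (\pi_2(\gamma),\gamma)$ is injective, so $\sum_\gamma \hat B_1(\pi_2(\gamma),\gamma)^2 \le 1$, while $\sum_\gamma \hat B_2(\gamma)^2 = 1$ by Plancherel. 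Hence if this correlation exceeds $2\epsilon$ on average, a Cauchy--Schwarz step — with the factor $(1-2\delta)^{|\gamma|}$ keeping the effective degree $|\gamma|$ at $O(1/\delta)$ — lets me decode: sample $\gamma$ with probability $\hat B_2(\gamma)^2$, label $w$ by a uniform element of $\gamma$ and $u$ by a uniform element of $\pi_2(\gamma)$. This randomized labeling satisfies a $\poly(\epsilon,\delta)$ fraction of the outer edges, contradicting the outer soundness once it is amplified below that threshold.

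The heart of the matter, and the step I expect to be most delicate, is obtaining soundness $\tfrac12$ from only two bits. A naive two-query test (prover $1$ returns $A_u(f)$, prover $2$ returns one value $A_w(\cdot)$) is in fact \emph{not} sound: the analogous Fourier sum carries a factor $\sum_{\gamma:\pi_2(\gamma)=\alpha}(1-2\delta)^{2|\gamma|}$ that blows up over the many $\gamma$ projecting to a fixed $\alpha$, so no labeling can be extracted. Having one prover answer the \emph{product} $A_u(f)\,A_w(g)$ rather than either bit alone is exactly what removes this blow-up, since folding $\gamma$ together with $\pi_2(\gamma)$ inside a single Fourier coefficient of $B_1$ makes the pairing injective and keeps both Cauchy--Schwarz factors bounded by $1$. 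Getting the bookkeeping of the folding (over the true constant and over $\pi$) and of the noise right, so that completeness stays $1-\epsilon$ while the decoding loss stays $\poly(\epsilon,\delta)$, is the main technical work I would have to carry out.
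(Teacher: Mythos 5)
There is a genuine gap, and it is fatal to the construction as stated: the test is not sound. Because your first prover is queried with the entire edge $(u,w)$, the two provers can cheat on \emph{any} outer instance. Fix an arbitrary labeling $\ell_W$ of the $W$-side (it need not satisfy anything). Let prover $2$ answer $B_2(w,h)=h(\ell_W(w))$, and let prover $1$ --- who sees $u$, $w$, and hence the projection $\pi_{uw}$ --- answer $B_1(u,w,f,g)=f\bigl(\pi_{uw}(\ell_W(w))\bigr)\,g(\ell_W(w))$. Then
$B_1\cdot B_2 = f(\pi_{uw}(\ell_w))^2\,g(\ell_w)^2\,\mu(\ell_w)=\mu(\ell_w)$,
so the verifier accepts with probability $1-\delta$ regardless of the value of the label cover instance. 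The soundness of H\aa stad's test rests on the $u$-table being a single object shared across all neighbors $w$ of $u$; once prover $1$'s answer may depend on $w$, that consistency constraint evaporates. This is also where your Fourier argument silently breaks: the bound $\sum_\gamma \hat B_1(\pi_2(\gamma),\gamma)^2\le 1$ together with Plancherel for $B_2$ only gives $|\E[B_1B_2]|\le 1$, which is vacuous, and the proposed decoding assigns $u$ the label ``a random element of $\pi_{uw,2}(\gamma)$'' where both $\gamma$ and the projection depend on the edge --- an edge-dependent labeling does not contradict label cover soundness (in the cheating strategy above it ``satisfies'' every edge). The problem is in the test itself, not in the $\PCP$-to-$\MIP$ translation.

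The paper avoids this trap by dispensing with the label cover/long code machinery altogether and using only the \emph{outcome} of H\aa stad's theorem, namely the NP-hardness of distinguishing $\Opt(\Psi)\ge 1-\epsilon$ from $\Opt(\Psi)\le\frac{1+\epsilon}{2}$ for \MaxLin{}. Both provers are asked for the Hadamard encoding of a single global assignment $x\in\bool^n$ (exponentially long, which is fine for an $\MIP$ since queries have polynomial length): the verifier picks a random equation $\chi_l(x)=b$ and a random $y$, and checks $P_1(l\oplus y)\cdot P_2(y)=b$. The crucial point is that each prover's query ($l\oplus y$, respectively $y$) is individually uniform, so \emph{neither} prover learns which equation is being tested and no analogue of the above adaptation is possible; the diagonal Fourier terms are then weighted by $\E_{(l,b)}[b\chi_a(l)]\le 2\Opt(\Psi)-1$ and Cauchy--Schwarz gives soundness exactly $\Opt(\Psi)$. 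Your instinct that one prover should answer a \emph{product} of two bits is sound and is in fact realized there (the honest $P_1(l\oplus y)=\chi_l(x)\chi_y(x)$), but the construction must also hide the constraint from both provers, which yours does not.
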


  \begin{proof}
    We reduce from the \MaxLin{} problem.  Given is a $\MaxLin{}$
    instance $\Psi$, on $n$ variables $x_1, \ldots, x_n$ and $m$
    linear equations $\{l_i(x_i) = b_i\}_{i \in [m]}$.

    The provers are expected to provide oracle access to the Hadamard
    encoding of a $(1-\epsilon)$-satisfying assignment $x \in
    \bool^n$.  In other words, the verifier will give each prover a
    vector $a \in \bool^n$ and expects in response the value of the
    linear function $\chi_a(x) \in \{-1,1\}$.
    
    The verifier proceeds as follows:
    \begin{enumerate}
    \item Pick a random equation $\chi_{l}(x) = b$ in $\Psi$
    \item Pick random $y \in \bool^n$ 
    \item Check that $P_2(y) \cdot P_1(l \oplus y) = b$
    \end{enumerate}

    It is easy to see that there is a strategy for the provers which
    makes the verifier accept with probability at least $\Opt(\Psi)$.
    More interestingly, we will now prove that, $\Opt(\Psi)$ is
    \emph{exactly} the maximum acceptance probability, over any
    strategy for $P_1$ and $P_2$.

    We can then write the acceptance probability of the verifier as
    \begin{eqnarray}\label{eq1}
      \Pr[\textrm{Verifier accepts}]  & = & \E_{\substack{y \in \bool^n\\(l,b) \in \Psi}}\left[\frac{1 + b P_1(l \oplus y)P_2(y)}{2}\right].
\end{eqnarray}
Replacing the two functions by their Fourier expansion we need to
analyze
$$
\sum_{a, a'} \hat P_1(a) \hat P_2(a') \E_{y,(l,b)}[
b \chi_{a} (l\oplus y) \chi_{a'} (y)].$$
All terms with $a\not= a'$ have expectation 0 and furthermore
we have 
$$
\left|\E_{(l,b)} [b \chi_a(l)]\right|  \leq 2 \Opt(\Psi) -1,
$$
as the assignment given by $a$ satisfies at most an $\Opt(\Psi)$
fraction of the equations and at least a fraction $1-\Opt(\Psi)$
as its negation does not satisfy more than a $\Opt(\Psi)$ fraction.
We conclude that (\ref{eq1}) is bounded by
$$
\frac {1+\sum_a |\hat P_1(a) \hat P_2(a)| (2 \Opt(\Psi) -1)}2.
$$
Finally note that, by Cauchy-Schwarz,
$$
\sum_a |\hat P_1(a) \hat P_2(a)|
\leq \left( \sum_a \hat P^2_1(a)\right)^{1/2}
\left( \sum_a \hat P^2_2(a) \right)^{1/2} =1
$$
and this finishes the argument.
\end{proof}

  \subsection{The general case}

We have

  \begin{theorem}
    For every $k \ge 2$, $\epsilon > 0$, we have
    $$\NP \subseteq \MIP[k,1-\epsilon, 2/2^k+\epsilon].$$
  \end{theorem}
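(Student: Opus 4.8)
The plan is to generalize the two‑prover protocol of the previous theorem to a \emph{star}-shaped test on $k$ provers, in which a single hub prover interacts with each of $k-1$ leaf provers through an \emph{independently} sampled equation. As before I would reduce from \MaxLin: by H\aa stad's theorem I may assume the instance $\Psi$ on variables $x_1,\dots,x_n$ satisfies either $\Opt(\Psi)\ge 1-\epsilon$ (completeness case) or $\Opt(\Psi)\le 1/2+\epsilon$ (soundness case), and each of the $k$ provers is asked to supply the Hadamard encoding $\chi_x$ of a near‑optimal assignment $x\in\bool^n$, so that querying prover $i$ at $a$ returns $\chi_a(x)\in\{-1,1\}$.

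The verifier I would use samples a uniform $y\in\bool^n$, sends it to the hub $P_1$, and independently samples $k-1$ equations $(l_2,b_2),\dots,(l_k,b_k)$ from $\Psi$, sending $y\oplus l_j$ to prover $P_j$ for $2\le j\le k$. It accepts iff $P_1(y)\,P_j(y\oplus l_j)=b_j$ for every $j$. Each prover sends a single bit, so this is a $k$-prover one-bit \MIP. For completeness, honest provers answering with $\chi_x$ give $P_1(y)P_j(y\oplus l_j)=\chi_{l_j}(x)$, so arm $j$ accepts exactly when $x$ satisfies equation $j$; since the $k-1$ equations are independent, all arms accept with probability at least $\Opt(\Psi)^{k-1}\ge (1-\epsilon)^{k-1}\ge 1-(k-1)\epsilon$, which after rescaling $\epsilon$ yields completeness $1-\epsilon$.

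The soundness analysis is where the real work lies, and the idea that keeps it tractable is to \emph{condition on the hub's query} $y$. Writing the acceptance probability as
\[
\E_{y,(l_j,b_j)}\Big[\prod_{j=2}^{k}\tfrac{1+P_1(y)\,b_j\,P_j(y\oplus l_j)}{2}\Big],
\]
I note that once $y$ (and hence the bit $P_1(y)$) is fixed, the $k-1$ factors depend on disjoint pieces of randomness $(l_j,b_j)$ and are therefore independent, so the inner expectation factorizes into per-arm terms $\tfrac12\big(1+P_1(y)h_j(y)\big)$, where $h_j(y):=\E_{(l,b)}[b\,P_j(y\oplus l)]$. Two properties of $h_j$ drive the bound: trivially $\|h_j\|_\infty\le 1$, since it is an average of $\pm1$ values; and, expanding $h_j=\sum_a \hat P_j(a)\,\E_{(l,b)}[b\chi_a(l)]\,\chi_a$ and using $|\E_{(l,b)}[b\chi_a(l)]|\le 2\Opt(\Psi)-1$ exactly as in the two-prover proof, Plancherel gives the $L_2$ bound $\E_y[h_j(y)^2]\le(2\Opt(\Psi)-1)^2$.

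It then remains to expand $\prod_{j}(1+P_1(y)h_j(y))$ over subsets $T\subseteq\{2,\dots,k\}$ and bound $\E_y\big[P_1(y)^{|T|}\prod_{j\in T}h_j(y)\big]$ for each $T$. The empty set contributes exactly $1$, giving the leading term $2/2^k$. For any nonempty $T$ I would bound all but two of the factors crudely by $\|h_j\|_\infty\le1$ and apply Cauchy--Schwarz to the remaining two, so that each such term is at most $2\Opt(\Psi)-1$ in absolute value; summing the at most $2^{k-1}$ terms gives acceptance at most $2/2^k+(2\Opt(\Psi)-1)$, i.e.\ $2/2^k+2\epsilon$ in the soundness case. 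The anticipated obstacle is precisely this control of the higher-order cross terms — a priori a product of three or more correlations need not be governed by the \MaxLin{} gap — and the conditioning-and-factorization step is what defuses it, by letting all but two of the factors be absorbed into the trivial $L_\infty$ bound while the $L_2$-smallness of the surviving pair supplies the gap. Rescaling $\epsilon$ throughout then establishes $\NP\subseteq\MIP[k,1-\epsilon,2/2^k+\epsilon]$.
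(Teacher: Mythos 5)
Your protocol is identical to the paper's (up to relabeling which prover is the hub), and your central estimate --- the $L_2$ bound $\E_y[h_j(y)^2]\le(2\Opt(\Psi)-1)^2$ on the conditional bias of each arm, proved via the Fourier expansion and Plancherel --- is exactly the paper's Lemma on $\E_y[\delta_j^2(y)]$. The only divergence is in how this lemma is converted into a bound on the product: the paper applies Markov's inequality to argue that outside a set of $y$'s of measure at most $k\epsilon_0$ every conditional bias is below $\sqrt{\epsilon_0}$, and then bounds each factor by $\tfrac{1+\sqrt{\epsilon_0}}{2}$, arriving at $k\epsilon_0+\bigl(\tfrac{1+\sqrt{\epsilon_0}}{2}\bigr)^{k-1}$; you instead expand the product multilinearly over subsets $T$ and kill each nonempty cross term with an $L_\infty$/Cauchy--Schwarz hybrid, arriving at the cleaner bound $2/2^k+(2\Opt(\Psi)-1)$ with no square-root loss. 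Both endgames are correct; yours is slightly sharper and avoids the union bound, while the paper's is perhaps more transparent about where independence conditioned on $y$ is being used. (A cosmetic point: for $|T|=1$ there is only one factor, so "all but two" should read "all but one or two"; Jensen or Cauchy--Schwarz against $P_1(y)$ handles that case the same way.)
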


  \begin{proof}
    As before, we design a $\MIP$ system for linear equations.  Given is
    a $\MaxLin{}$ instance $\Psi$, in which either $\Opt(\Psi) \ge
    1-\epsilon_0$, or $\Opt(\Psi) \le \frac{1 + \epsilon_0}{2}$, where
    $\epsilon_0$ will be chosen small enough to get the completeness
    and soundness bound that we want.

    The verifier again expects all the $k$ provers to provide answers
    to the Hadamard coding of the good assignment, and it then does the obvious
    generalization of the $k=2$ case:
    \begin{enumerate}
    \item Pick $k-1$ random equations $l_j(x) = b_j$, $1 \le j \le k-1$
    \item Pick random $y \in \bool^n$
    \item Check that $P_j(l_j \oplus y) \cdot P_k(y) = b_j$ for
      every $1 \le j \le k-1$
    \end{enumerate}

    It is clear that the completeness is at least $(1-\epsilon_0)^{k-1}
    \ge 1-k\epsilon_0$.  Thus, as long as $\epsilon_0 \le \epsilon/k$,
    we have the desired completeness.
    
    Let us now study the soundness, i.e., the maximum possible
    acceptance probability of verifier, given that $\Opt(\Psi) \le
    \frac{1+\epsilon_0}{2}$.
    
    We say that prover $P_j$ \emph{succeeds} if $P_j(l_j \oplus y)
    \cdot P_k(y) = b_j$.  From the analysis of the previous theorem,
    we know that the probability that $P_j$ succeeds is at most
    $\frac{1 + \epsilon_0}{2}$.  Thus, if the events that the
    different provers succeed were independent, we would obtain the
    desired soundness of $\approx 2^{1-k}$.  However, {\em a priori}, it may
    be that the success events of the provers are very correlated,
    e.g., it could be that if one succeeds then they all succeed.

    To cope with this, we need to obtain a more robust version of the
    previous analysis.  Let $\frac {1+\delta_j(y)}2$ be the 
    probability that $P_j$ succeeds given that $y$ is chosen.  We
    have the following lemma.

\begin{lemma}\label{lemma:dy2}
    $\E_y[\delta_j^2 (y)] \leq \epsilon_0^2.$
\end{lemma}

\begin{proof}
We have
$\delta_j(y)= \E_{(l,b)}[ b P_k(y)P_j(l \oplus y)]$ and
thus
$$\E_y[\delta_j^2 (y)] =
\E_{(l,b),(l',b'),y} [bb'P_j (l\oplus y) P_j (l'\oplus y)].$$
Similarly to the case $k=2$ we replace the function by its
Fourier expansion and we are left to analyze
$$
\sum_{a,a'} \hat P_j(a)\hat P_j (a') \E_{y,(l,b),(l',b')} [
bb'\chi_a(l+y) \chi_{a'}(l'+y) ].
$$
Again we only have nonzero terms when $a=a'$.  For these terms it
easy to see that
$$
\left| \E_{(l,b),(l',b')} [bb'\chi_a(l) \chi_{a}(l') ] \right| \leq
(2\Opt (\Psi)-1)^2= \epsilon_0^2.$$
Using $\sum_a \hat P_j(a)^2=1$, the lemma follows.
\end{proof}

Lemma \ref{lemma:dy2} implies that the fraction of $y$ such that 
$\delta_j(y) \geq \sqrt \epsilon_0$ is bounded by $\epsilon_0$.

We conclude that the, for the $y$ chosen, the probability
that $\delta_j(y) \geq \sqrt \epsilon_0$ for any
$j$ is bounded by $k\epsilon_0$.  On the other hand if 
$\delta_j(y) \leq \sqrt \epsilon_0$ for all values of $j$ the probability
that the verifier accepts is bounded by $(\frac {1+\sqrt {\epsilon_o}}2)^{k-1}$.
We conclude that the overall probability that the verifier accepts
is bounded by
\[
k\epsilon_0+\left(\frac {1+\sqrt {\epsilon_o}}2\right)^{k-1},\]
and choosing $\epsilon_0$ sufficiently small, this is bounded
by $2^{1-k}+\epsilon$.
\end{proof}

\section{The High End -- $\EXP$ and $\NEXP$ Results}
\label{sec:exp}

In this section we prove Theorems~\ref{thm:exp_bound} and
\ref{thm:nexp_bound}.  These are essentially just ``blow-ups'' of
corresponding approximation algorithms and inapproximability results.

\begin{theorem}[Theorem \ref{thm:exp_bound} restated]
   For all sufficiently large $k$, $\epsilon > 0$, $s \leq \frac{0.62k}{2^k}(1 - \epsilon)$ we have
   $$\MIP[k, 1-\epsilon, s] \subseteq \EXP.$$
   This holds also for $k = 3$ and $s \le 1/2-\epsilon$.
\end{theorem}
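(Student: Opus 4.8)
The plan is to observe that the optimal value of a $k$-prover one-bit game is \emph{exactly} the optimum of an (exponentially large) Max-$k$-CSP instance, and then to feed this instance to a known polynomial-time approximation algorithm for Max-$k$-CSP whose guarantee, combined with the almost-perfect completeness, separates the yes- from the no-instances.

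First I would set up the CSP. Fix the verifier $V$ and an input $x$. For each prover $i$ and each question $q$ that $V$ may send to prover $i$, introduce a Boolean variable $z_{i,q}$ encoding prover $i$'s answer to $q$. A choice of randomness $r$ determines a tuple of questions $(q_1(r),\ldots,q_k(r))$ together with an acceptance predicate $\phi_r$ on the $k$ answer bits, and thus one constraint $\phi_r(z_{1,q_1(r)},\ldots,z_{k,q_k(r)})$ of weight $\Pr[r]$; note these are genuinely $k$ distinct variables since the prover indices differ. A deterministic prover strategy is precisely an assignment to the $z_{i,q}$, and its acceptance probability is the weighted fraction of satisfied constraints, so the game value equals $\Opt$ of this Max-$k$-CSP instance. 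The instance has at most exponentially many variables and constraints, and since $V$ runs in polynomial time we can list them and evaluate each $\phi_r$ in exponential time.

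Next I would run the best known approximation algorithm for Max-$k$-CSP on this instance. For sufficiently large $k$, the semidefinite-programming algorithm of Charikar, Makarychev and Makarychev returns an assignment of value at least $\alpha\cdot\Opt$ with $\alpha>0.62k/2^k$ (a factor-$k$ improvement over the trivial random assignment that underlies the $\BPP$ regime), and it runs in time polynomial in the instance size, hence exponential in $|x|$ even though the SDP is exponentially large. In the yes-case $\Opt\ge 1-\epsilon$, so the output has value $>0.62k/2^k\cdot(1-\epsilon)\ge s$, while in the no-case every assignment has value $\le\Opt\le s$; we therefore accept iff the returned assignment beats the threshold $s$, placing the language in $\EXP$. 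For $k=3$, where $0.62k/2^k<\tfrac12$, the large-$k$ bound is too weak, so I would instead invoke a Max-$3$-CSP approximation algorithm (such as Zwick's semidefinite-programming algorithm), which on near-satisfiable instances with $\Opt\ge 1-\epsilon$ outputs an assignment of value exceeding $1/2-\epsilon$, giving the claimed threshold there.

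The main obstacle is located entirely in the quality of the approximation guarantee: the whole argument reduces to having a polynomial-time Max-$k$-CSP algorithm whose ratio strictly exceeds $0.62k/2^k$ for large $k$, respectively one that beats value $1/2$ on near-satisfiable Max-$3$-CSP instances. The completeness gap $(1-\epsilon)$ serves only to open up slack between $\alpha\cdot\Opt$ and $s$; since the factors of $(1-\epsilon)$ cancel, it is the \emph{constant} in the approximation ratio, not the completeness, that fixes the reachable soundness threshold. The remaining ingredients — the exponential blow-up in the number of variables and constraints and the solution of an exponential-size SDP to sufficient accuracy — are routine once one accepts exponential running time.
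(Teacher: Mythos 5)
Your proposal is correct and follows essentially the same route as the paper: encode the provers' joint strategy as an exponential-size Max $k$-CSP instance whose optimum equals the game value, then run a polynomial-in-instance-size approximation algorithm (ratio $0.62k/2^k$ for large $k$, and Zwick's $1/2$-approximation for Max $3$-CSP when $k=3$) and threshold the returned value. The only quibble is attribution: the $0.62k/2^k$ ratio the theorem's constant relies on is due to Makarychev and Makarychev, not Charikar--Makarychev--Makarychev, whose guarantee is the weaker $0.44k/2^k$.
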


\begin{proof}
  Let $L \in \MIP[k, 1-\epsilon, s]$ with $s \le \frac{0.62k}{2^k}(1-\epsilon)$.  Given
  an instance, the task of determining whether $x \in L$ boils down to
  finding the best joint strategy for the $k$ provers.  If the
  verifier uses $r$ random bits she can send at most $2^r$ different
  queries to each prover, thus the optimal strategy can be described
  by $k \cdot 2^r = 2^{\poly |x|}$ bits.  Further, for each outcome of the verifier's
  randomness, the acceptance criterion is a constraint on some $k$
  bits of the strategy.  In other words, what we have is an
  exponentially large Max $k$-CSP instance.  The value of this
  instance can be approximated in time polynomial in its size to
  within a factor $0.62k/2^k$ by the algorithm of Makarychev and Makarychev~\cite{MM12}.  For the case $k = 3$ we use the $1/2$-approximate
  Max $3$-CSP algorithm of Zwick \cite{Zwick98}.
\end{proof}

Next we show that if the soundness is sufficiently large,
exponential-size $k$-query $\PCP$ systems can express every language
in $\NEXP$.

\begin{theorem}
  \label{thm:nexp_pcp}
  For $t = 2^{\lceil \log_2(k+1) \rceil}$ ($k+1$ rounded up to the next power of $2$) we have
  $$\MIP[ k, 1-\epsilon, t/2^k+\epsilon] = \NEXP.$$
\end{theorem}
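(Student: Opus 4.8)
The plan is to prove the two inclusions separately. For the easy direction, $\MIP[k,1-\epsilon,t/2^k+\epsilon]\subseteq\NEXP$, I would reuse the idea from the proof of Theorem~\ref{thm:exp_bound}. A protocol is determined by the joint strategy of the $k$ provers, and since each prover returns a single bit and the acceptance probability is affine in each prover's (mixed) strategy, an optimal strategy may be taken to be deterministic; it is thus a table of $k\cdot 2^r=2^{\poly(|x|)}$ bits, where $r=\poly(|x|)$ is the verifier's randomness. A $\NEXP$ machine nondeterministically guesses this exponential-size table and then, in deterministic exponential time, computes the \emph{exact} acceptance probability by averaging the verifier's decision over all $2^r$ random strings. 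Since the completeness $1-\epsilon$ strictly exceeds the soundness $t/2^k+\epsilon$ (as $t<2^k$ in the regime of interest, so $t/2^k<1$), comparing the computed value against any threshold inside the gap correctly decides membership.

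For the reverse containment $\NEXP\subseteq\MIP[k,1-\epsilon,t/2^k+\epsilon]$, I would start from Chan's optimal inapproximability for Max $k$-CSP \cite{Chan12}. His hardness is witnessed by a $\PCP$ that makes exactly $k$ single-bit queries into (folded) long codes and accepts iff the vector of answers lies in a pairwise-independent subgroup of $\bool^k$; the smallest such subgroup has size $t=2^{\lceil\log_2(k+1)\rceil}$, which yields completeness $1-\epsilon$ and soundness $t/2^k+\epsilon$ and establishes the statement for $\NP$. To lift this to $\NEXP$ I would invoke the standard ``scaling-up'' of the $\PCP$ theorem in the style of Babai--Fortnow--Lund \cite{BFL91}: every $\NEXP$ language has a probabilistically checkable proof of exponential length using polynomial verifier randomness and the same constant-query local test, so scaling (or composing with) Chan's reduction produces a $k$-query single-bit $\PCP$ for $\NEXP$ with identical completeness and soundness.

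The final step, and the one I would treat most carefully, is converting this $k$-query $\PCP$ into a genuine $k$-prover one-bit $\MIP$ with no loss in parameters. The $\MIP$ verifier routes the $i$-th query to prover $P_i$ and applies the same predicate to the $k$ returned bits; completeness is immediate since the honest provers share the $\PCP$ oracle. The soundness requires that Chan's analysis remain valid when the $k$ answers come from $k$ arbitrary, possibly inconsistent, Boolean functions $P_1,\dots,P_k$ rather than from a single oracle. This is precisely the robustness already exhibited in the $\AM$ proofs of Section~\ref{sec:am}: the Fourier-analytic bound on the acceptance probability expands each prover as a separate function, uses only that each satisfies $\sum_a\hat P_i(a)^2=1$, and combines cross-terms by Cauchy--Schwarz across provers, never invoking any consistency between them. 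The main obstacle is therefore to verify that Chan's soundness proof has this same ``one function per query slot'' structure, so that replacing the single oracle by $k$ independent provers is a relaxation the analysis already tolerates. Granting this, the $\PCP$-to-$\MIP$ translation is free, and combining it with the $\NEXP$ upper bound establishes the theorem.
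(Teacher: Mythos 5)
Your proposal follows essentially the same route as the paper: Chan's approximation-resistant arity-$k$ predicate with $t$ accepting configurations, scaled up to $\NEXP$ in the style of \cite{BFL91}, combined with the observation that the $k$-partite form of the $\PCP$ makes the conversion to a $k$-prover one-bit $\MIP$ lossless. The one step you flag as needing care --- that Chan's soundness analysis uses a separate function per query slot --- is confirmed by the paper, which notes that Chan already analyzes his $\PCP$ in the $k$-partite setting, so that the $\PCP$/$\MIP$ distinction is purely syntactic; the only point the paper dwells on that you pass over quickly is ensuring the upscaled verifier still runs in polynomial time, which is handled by citing \cite{BGS98} and \cite{BGHSV05}.
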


This immediately implies Theorem~\ref{thm:nexp_bound}.

\begin{proof}[Proof sketch]
The proof follows from a upscaling of the
recent $\PCP$ of Chan \cite{Chan12} that gives a
predicate of arity $k$ which has $t$ accepting
configurations and which is approximation resistant.

In a standard $\PCP$, the verifier runs in polynomial
time, uses a logarithmic number of random coins
and reads a constant number of bits in a polynomial
size proof and verifies an NP-statement.  We are currently
interested in the situation where the crucial parameters,
except the running time of the verifier, are exponentially
larger.  

To be more precise we are interested in
a polynomial time verifier, that uses a polynomial number
of random coins and gets one bit
each from $k$ different provers that respond to questions of
polynomial length.

As is convenient for us, Chan already analyzed his $\PCP$
in the $k$-partite situation where each bit is read
from a separate table.  This model is exactly the same
as a $k$-prover model and hence this difference is
only syntactical.

It remains to address the question on how to make the upscaled
verifier run in polynomial time.  This amounts to saying
that a verifier of an $\NEXP$ statement
runs in polynomial time.  This was 
explicitly needed in \cite{BFL91} but this paper
predates the $\PCP$-Theorem.  The fact that this
is true also for upscaled versions of the $\PCP$-Theorem has been explicitly
stated in \cite{BGS98} and \cite{BGHSV05}.  The intuitive reason
that this is true is that the verifier only needs to ensure
that some bits in a suitable encoding of the inputs are
correct and this takes polynomial time in the size of
the input but not the other parameters of the proof.
\end{proof}

\section{Concluding Remarks}
\label{sec:conclusion}

There are a number of interesting avenues for further work.  In this
paper we focused solely on the case of almost perfect completeness and
each prover sending exactly $1$ bit.  Obviously, understanding what
happens with the expressiveness of these systems for other
completeness values (in particular perfect completeness) and slightly
less laconic provers would be very interesting.  By simple extensions
of the methods used in this paper it is possible to get some results
but it would be interesting to see if perfect completeness could
lead to a significantly different situation in any range
of parameters.

There is also a specific question more directly related
to the current paper.
  There is a huge gap between our lower bound $\AM$ and upper bound
  $\EXP$ for soundness $s = 2/2^k + \epsilon$.  It seems quite
  plausible that an upper bound for this range of $s$ should be
  $\PSPACE$ rather than $\EXP$ -- proving this essentially boils down
  to proving that there is a $\delta > 0$ such that bipartite
  instances of Max $2$-CSP can be approximated within a factor
  $1/2+\delta$ in polylog-space (and not necessarily polynomial time).
  We hope that the recent algorithms for Max Cut, in particular
  \cite{KS11}, can be adapted to achieve this.
  
  Even if this turns out to be true, whether the correct class
  here is $\AM$ or $\PSPACE$ or something in between we have little
  intuition about.

\smallskip
\noindent
{\bf Acknowledgment.}  We are grateful to Salil Vadhan for
pointing out a simple proof of the lower bound given in
Theorem~\ref{thm:szk_range} rather than the more complicated
proof with worse parameters that we originally had.
We are also grateful to Madhu Sudan and Or Meir for discussions
on how to blow-up PCPs.

\bibliographystyle{alpha} \bibliography{laconic}
  
\end{document}